\definecolor{dgreen}{rgb}{0.0, 0.5, 0.0}
\definecolor{dred}{rgb}{0.65, 0.16, 0.16}
\newcommand{\VO}{\emph{VelPro} }
\newcommand{\VPD}{Volumetric Progressive Dynamics\xspace} 
\def\argmin{\mathop{\rm argmin}}
\renewcommand{\l@section}{\@dottedtocline{1}{1.5em}{2.6em}}
\renewcommand{\l@subsection}{\@dottedtocline{2}{4.0em}{3.6em}}
\renewcommand{\l@subsubsection}{\@dottedtocline{3}{7.4em}{4.5em}}
\definecolor{lightbluishgrey}{rgb}{0.76078,0.88235,0.92157}
\newtheorem{lemma}{Lemma}
\begin{document}

\title{Progressing Level-of-Detail Animation of Volumetric Elastodynamics
}

\begin{CCSXML}
  <ccs2012>
  <concept>
  <concept_id>10010147.10010371.10010352.10010379</concept_id>
  <concept_desc>Computing methodologies~Physical simulation</concept_desc>
  <concept_significance>500</concept_significance>
  </concept>
  </ccs2012>
\end{CCSXML}
\ccsdesc[500]{Computing methodologies~Physical simulation}
%
%
\keywords{Progressive Simulation, LOD Animation, Volumetric Simulation, Linear Finite Elements}

\author{Jiayi Eris Zhang}
\affiliation{
  \institution{Stanford University}
  \country{USA}}
\affiliation{
  \institution{Adobe}
  \country{USA}}
\email{eriszhan@stanford.edu}

\author{Doug L. James}
\affiliation{
  \institution{Stanford University}
  \country{USA}}
\email{djames@cs.stanford.edu}

\author{Danny M. Kaufman}
\affiliation{
  \institution{Adobe}
  \country{USA}}
\email{dannykaufman@gmail.com}

\begin{abstract}

We extend the progressive dynamics model \cite{zhang2024progressive} from cloth and shell simulation to volumetric finite elements, enabling an efficient level-of-detail (LOD) animation-design pipeline with predictive coarse-resolution previews facilitating rapid iterative design for a final, to-be-generated, high-resolution animation of volumetric elastodynamics. This extension to volumetric domains poses significant new challenges, including the construction of suitable mesh hierarchies and the definition of effective prolongation operators for codimension-0 progressive dynamics.

To address these challenges, we propose a practical method for defining multiresolution hierarchies and, more importantly, introduce a simple yet effective topology-aware algorithm for constructing prolongation operators between overlapping (but not necessarily conforming) volumetric meshes. Our key insight is a boundary binding strategy that enables the computation of barycentric coordinates, allowing several off-the-shelf interpolants—such as standard barycentric coordinates, Biharmonic Coordinates \cite{wang2015linear}, and Phong Deformation \cite{James2021}—to serve as ``plug-and-play'' components for prolongation with minimal modification. We show that our progressive volumetric simulation framework achieves high-fidelity matching LOD animation across resolutions including challenging dynamics with high speeds, large deformations, and frictional contact.

\end{abstract}

\begin{teaserfigure}
\centering
\includegraphics[width=1\linewidth]{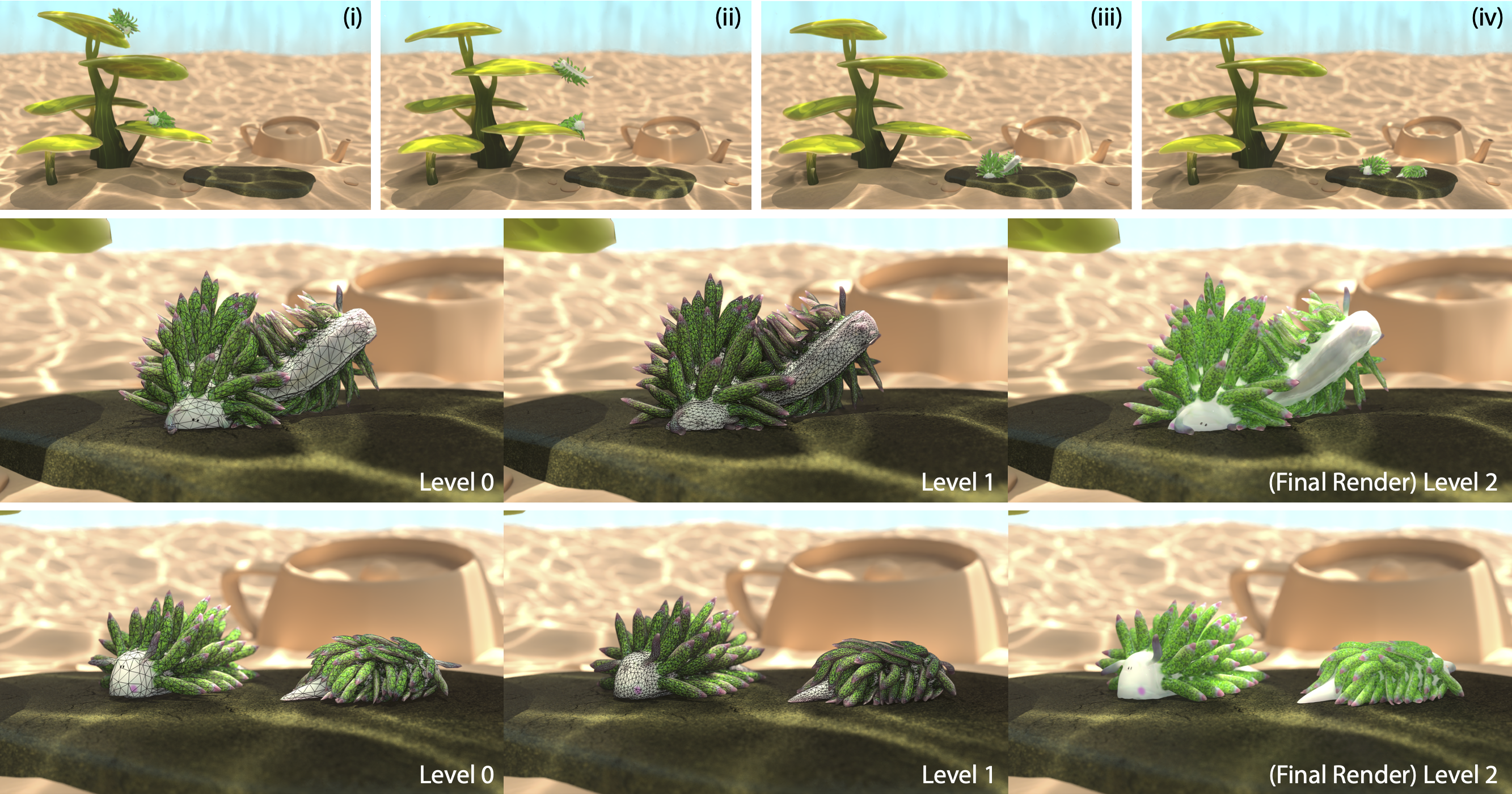}
   \vspace{-0.5cm}
   \caption{{\bf Two Squishy Cartoonish Leaf Sheep on a Ballistic Adventure:} (Top) An animator explores a tricky shot where (left to right) two Leaf Sheep on a plant (i) experience an initial disturbance, (ii) fall and roll off adjacent leaves, (iii) impact each other as they hit a stone, then (iv) somehow land upright ready for grazing. The design is tediously constructed by manually adjusting initial conditions using a fast, coarse (level 0) tetrahedral volume simulation (10K tetrahedra), which is then refined using progressive volumetric dynamics to obtain finer level 1 (73K tetrahedra) and level 2 (400K tetrahedra) simulation models with predictably consistent motions. Zoomed-in shots of frame iii (Middle) and frame iv (Bottom) show that the refined mesh deformations retain the character of the fast level-0 preview simulation, while enhancing detail of the deformations and contact interactions. Since this coarse level-0 simulation is more than 100 times faster than the final level-2 simulation, progressive volumetric dynamics greatly improves the flexibility and efficiency of previewing various design options.}
    \label{fig:teaser}
\end{teaserfigure}

\maketitle

\section{Introduction}

\begin{figure}[t!]
  \centering
  \includegraphics[width=\linewidth,keepaspectratio]{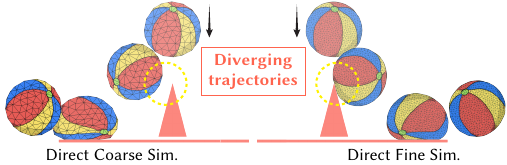} 
  \caption{{\bf Diverging Trajectories:} Direct simulations of a ball bouncing on a spike bounce one way (Left) at coarse resolution, but another way (Right) at fine resolution due to subtle mesh differences. In contrast, \VPD avoids trajectories diverging at different resolutions.
    \label{fig:beachball-direct}}
\end{figure}

Volumetric objects are ubiquitous, and volumetric elastodynamics plays a central role in applications such as visual effects, video games, virtual reality, soft body animation, and even stylized 2D cartoon animation. These scenarios often prioritize visual plausibility, artistic control, and interactive performance over strict physical accuracy. By modeling the behavior of elastic materials, volumetric dynamics helps bring deformable objects and their complex interactions to life in a visually compelling way. However, achieving high-quality volumetric animation often requires high-resolution discretizations with a large number of degrees of freedom, leading to substantial computational costs. This burden poses a significant challenge for iterative creative workflows, where designers and artists need to explore multiple design variations and rely on rapid visual feedback.

To accelerate the iterative design process, recent work \cite{zhang2024progressive} introduce the Progressive Dynamics framework that addresses the long-standing challenge of enabling rapid iterative design for high-fidelity cloth and shell animation. This framework facilitates efficient modeling and animation workflows by generating predictive coarse previews that progressively refine into high-resolution results through a coarse-to-fine level-of-detail (LOD) approach. While this technique has proven highly effective for codimension-1 objects such as cloth and shells, extending it to volumetric domains presents fundamental challenges. In the shell case, Zhang et al. \shortcite{zhang2023progressive} construct a top-down shell mesh hierarchy by applying recursive edge-collapse surface decimation \cite{garland1997surface}, and track bijective mappings between local patches during decimation \cite{Lee1998,aksoylu2005multilevel,Liu:2021:SMIP}. These mappings are then composed to define a custom prolongation operator tailored for curved shell geometry and rest-shape preservation.

In contrast, for volumetric meshes, it is unclear how to construct effective hierarchies, and existing tetrahedral decimation methods~\cite{cignoni2000simplification,staadt1998progressive,renze1996generalized,trotts1999simplification,danovaro2002multiresolution} naturally do not yield such local mappings. Furthermore, even if such a hierarchy is provided, defining prolongation operators without relying on tracked correspondences remains another big open question, particularly since volumetric levels often exhibit not only non-conforming boundaries, but also partially overlapping or mismatched interior regions.

To address these challenges, we first propose a practical pipeline for generating multiresolution tetrahedral mesh hierarchies by combining standard surface mesh decimation algorithms \cite{garland1997surface, 10.1111:cgf.13933} with tetrahedralization methods \cite{hang2015tetgen, hu2018tetrahedral}. In general, we assume the resulting hierarchy consists of tetrahedral meshes that approximate the same domain but may have non-conforming boundaries.

Building on this, we introduce a simple yet effective topology-aware algorithm for constructing linear prolongation operators between these overlapping (but not necessarily conforming) volumetric meshes. Treating each coarse mesh as the embedding shape, our key insight is a boundary-binding strategy that enables the computation of barycentric coordinates (possibly negative) for the fine mesh, via extrapolation, and vice versa. This construction, in turn, enables several off-the-shelf interpolants—such as standard barycentric coordinates, Biharmonic Coordinates \cite{wang2015linear}, and Phong Deformation \cite{James2021}—to serve as ``plug-and-play'' components for prolongation with minimal modification. We show that all of these interpolants are fully compatible with our \VPD framework. With the key feature of consistent preview and refinement across LOD, our \VPD method may also be interpreted as a kind of physics-based relaxation tool for post-processing embedded simulations using the aforementioned interpolation or skinning methods. It enhances upsampled results by making them more physics-aware, thereby enabling the recovery of additional physical details, such as higher-frequency motions and improved contact resolution.

We present extensive evaluations demonstrating that our proposed \VPD framework generates high-fidelity, LOD animations across resolutions and handles challenging elastodynamic scenarios, including high-speed motion, large deformations, and frictional contact, in both 2D and 3D. For quantitative evaluation, we adopt the temporal continuity and geometric consistency metrics introduced by Zhang et al. \shortcite{zhang2025progressive} to assess animation quality across levels. Additionally, we incorporate their proposed quadratic penalty term to support optional user-controlled balancing between geometric consistency and enrichment in our LOD animation results.
\section{Related Work}

\subsection{Progressive Simulation}

\begin{figure*}[t!]
  \centering
  \includegraphics[width=\linewidth,keepaspectratio]{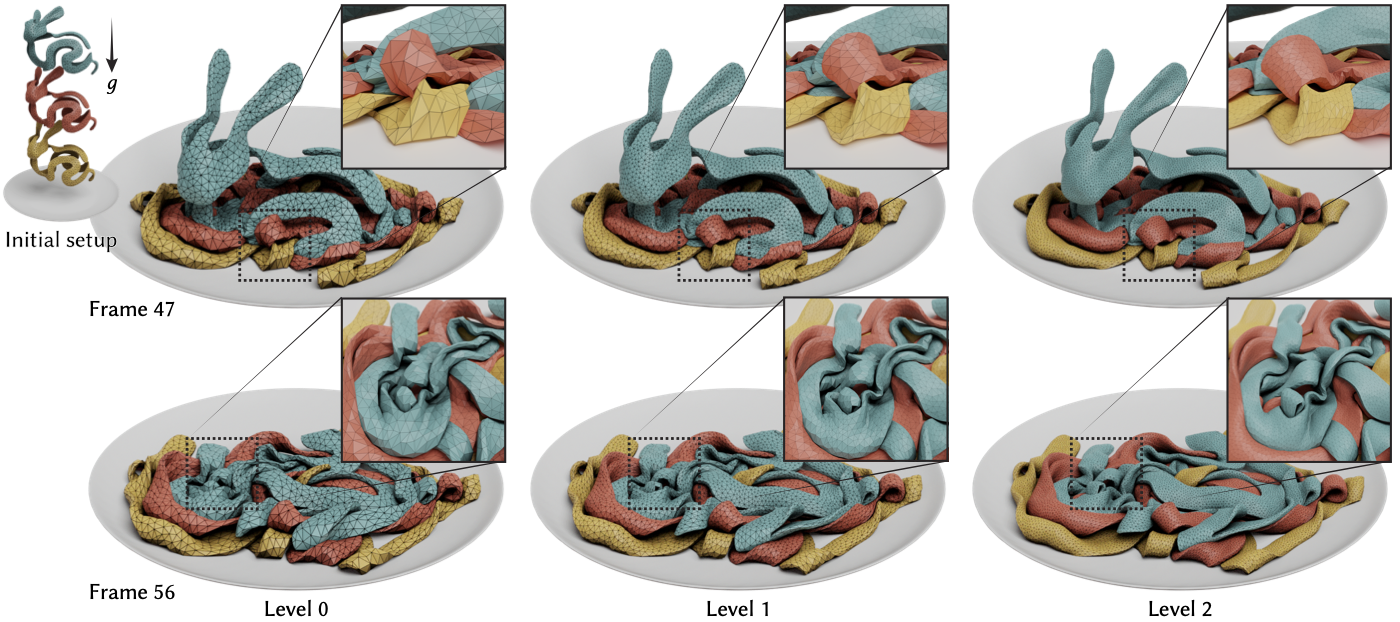} 
  \caption{{\bf Bunny noodles:} We show that \VPD works for shell-like volumetric materials, where (Left) the coarsest level-0 simulation mesh is only one element thick. Increasing simulation resolutions (Left to Right) demonstrate consistent and increasingly detailed contact resolution. Yum! }
  \label{fig:bunny-noodles}
\end{figure*}

The recently introduced progressive simulation framework \cite{10.1145/3550454.3555510, zhang2023progressive, zhang2024progressive, zhang2025progressive} provides an effective solution to the long-standing challenge of enabling rapid iterative design for high-fidelity cloth and shell simulations. It facilitates efficient modeling and animation workflows by generating predictive coarse previews that progressively refine to high-resolution finest-level results via a coarse-to-fine LOD workflow. Although these methods have shown success in both quasistatic \cite{10.1145/3550454.3555510, zhang2023progressive} and dynamic \cite{zhang2024progressive} cloth and shell simulations, their focus remains limited to codimensional domains.

Our work extends progressive dynamics \cite{zhang2024progressive, zhang2025progressive} to volumetric finite element simulations, a critical area for accurately capturing the behavior of a wider range of objects in the real world. By extending progressive dynamics to volumetric domains, we expand its applicability and enable more comprehensive physics-based simulations that support diverse scenarios beyond cloth and shells in a unified framework.

\subsection{Multiresolution Volumetric Mesh Hierarchy}
Multiresolution mesh hierarchies for volumetric meshes have been extensively studied to enable efficient storage, processing, and simulation of complex 3D structures. Early work on hierarchical tetrahedral decompositions \cite{Debunne2001} introduced adaptive refinement strategies for physics-based simulations, which enable locally increased resolution in areas of high deformation while maintaining computational efficiency. Progressive volumetric representations, such as tetrahedral mesh generation techniques \cite{staadt1998progressive, Molino2003}, extend the concept of progressive surface meshes \cite{Hoppe1996} to 3D domains, supporting dynamic LOD control through tetrahedral collapse and vertex split operations. In fluid simulation, octree-based adaptive methods \cite{Losasso2004} enable dynamic refinement of volumetric discretizations, which allows fine-scale resolution in regions with complex fluid behavior while maintaining coarser representations elsewhere. This adaptivity significantly improves performance in large-scale simulations without compromising detail. Additionally, learning-based approaches have begun to integrate hierarchical volumetric structures with neural architectures, as seen in data-driven tetrahedral mesh reconstruction methods \cite{wang2017cnn, Gao2020}. Despite these advancements, designing suitable and scalable multiresolution volumetric hierarchies that effectively balance geometric fidelity, computational efficiency, and numerical robustness still requires extensive customization for specific applications and remains a challenging problem.

\vspace{-3mm}

\subsection{Embedded Simulation and Physics Skinning}
Embedded simulation using both linear- and higher-order elements has become a widely adopted technique for reducing computational costs in physics-based animation, particularly in applications involving complex deformable objects \cite{faloutsos_dynamic_1997,molino_virtual_2004,Capell:2002:ISDDD,muller_interactive_2004, rivers_fastlsm_2007, zhu2010efficient, McAdams:2011:MGCharacters}. Linear interpolation, while computationally efficient, often introduces visual artifacts such as deformation gradient discontinuities, faceting artifacts, self-intersections, and inconsistent contact behaviors. Methods such as subdivision surfaces \cite{derose1998subdivision}, multiresolution displacement mapping \cite{james2003multiresolution}, and recent advances like Phong Deformation \cite{James2021} mitigate these issues to some extent by introducing smoother transitions or higher-fidelity surface details. While complementary, these approaches fundamentally fail to incorporate sufficient fine-scale simulation information into the upsampling process, leading to a lack of detailed nonlinearities in the resulting deformation. While $C^1$ and higher-order interpolants provide greater smoothness and nonlinear detail, they remain fundamentally constrained by the limited information available in the coarse embedded domain, making them inadequate for resolving complex effects such as contact.

Popular skinning-based methods that utilize generalized barycentric coordinates can smoothly deform embedded geometry by treating coarse shapes as control rigs \cite{Jacobson2011, joshi2007harmonic, Ju2005, floater2003mean, wang2015linear}. While effective for tasks such as character articulation and cage-based deformation, these techniques are often designed for arbitrary control geometries (``cages'') and are not typically applied to dense simulation meshes (although see \cite{ruan2024minnie,xian_scalable_2019}). More critically, similar to linear and higher-order interpolation schemes, they struggle to incorporate fine-scale physical details or resolve contacts, leading to artifacts like gaps, self-intersections, and inconsistent deformation behaviors. 

Our method addresses these limitations by enabling the use of any embedding or skinning interpolant as a prolongation operator within our progressive volumetric simulation framework. This facilitates the seamless transfer of velocities and other data across levels, integrating them into a contact-aware system. In some sense, our approach can be viewed as a relaxation technique that post-processes embedded simulations, enriching them with fine-scale physical details and resolving contact inconsistencies.

\subsection{Volumetric Mapping}
Maps between 3D objects and volumetric parameterizations over 3D domains are critical tools in numerous geometric processing and analysis tasks, ranging from deformation transfer to physics-based simulations. While robust solutions to the construction of bijective 2D maps for triangle meshes exist and have been well studied \cite{Tutte1963}, their extension to 3D volumetric mapping remains comparatively underexplored.

Existing methods for volumetric mapping generally fall into two categories: optimization-based approaches and combinatorial methods. Optimization-based approaches \cite{kovalsky2014controlling, aigerman2013injective} often aim to achieve injectivity, low distortion, or harmonicity. However, these techniques typically rely on solving nonlinear and nonconvex optimization problems, which are computationally expensive and prone to local minima. On the other hand, combinatorial approaches \cite{campen2016bijective, cherchi2023volmap, hinderink2023galaxy, hinderink2024bijective} leverage discrete formulations to guarantee bijectivity and robustness. Unfortunately, these methods are often restricted to specific topologies, such as ball-like or star-shaped domains, which limits their applicability to general 3D input geometries with arbitrary topology needed in physics-based simulation.


\subsection{Multigrid Prolongation}

Our volumetric prolongation operator is related to the techniques used for multigrid volumetric solvers (see \cite{10.1145/3528223.3530109} for an excellent summary). For efficiency and practicality, multigrid methods often exploit regular grids for efficient discretization and interpolation and are standard in graphics and engineering \cite{bolz_sparse_2003,trottenberg_multigrid_2001}.
In computer animation, \citet{zhu2010efficient} used trilinear interpolation to perform prolongation on uniform 3D grids suitable for finite-difference discretizations of elasticity for embedded character deformation. In a related work \cite{McAdams:2011:MGCharacters}, multigrid instability issues related to extrapolation outside the coarse-level domains were reduced using gradient-based prolongation techniques.

Recent multigrid methods for deformable models in graphics \cite{xian_scalable_2019,ruan2024minnie} have leveraged ``skinning space coordinates'' \cite{10.1145/3197517.3201387,10.1145/2185520.2185573} to interpolate coarse information to fine scales on a grid hierarchy. For general hierarchies and unstructured problems, algebraic multigrid approaches can also be used \cite{brandt_algebraic_1986,10.1145/3528223.3530109}, leading to their own algebraically inferred prolongation and restriction operators.

\citet{Georgii:2006:MG4RTDefo} use an unstructured nonnested hierarchy for tetrahedral FEM models, and employ barycentric interpolation to construct their prolongation operator, both for nodes contained within a coarser-level element, but also for nodes outside the coarse mesh via extrapolation using the nearest tetrahedral element. A similar approach is also used to construct a two-level tetrahedral mesh hierarchy for a numerical coarsening scheme in \cite{kharevych2009numerical}. Our method is most closely related to these approaches; however, we show that extrapolation can lead to binding-related deformation artifacts, propose a solution (see Figure~\ref{fig:binding-viz}) and bound extrapolation effects, and generalize to nonbarycentric interpolants (Biharmonic Coordinates \cite{wang2015linear} and Phong Deformation \cite{James2021}).

\subsection{Adaptive and Subspace Contact}
Adaptive and subspace methods have been widely used to efficiently simulate elastodynamics with contact by adaptively reducing the simulation to smaller subspaces or fewer degrees of freedom \cite{grinspun2002charms, capell2002multiresolution, ferguson2023timestep, teng2015subspace, kim2009skipping, zheng2011toward}. These approaches leverage model reduction techniques and adaptive basis updates to accelerate physics-based animation while maintaining accuracy in specific scenarios.

However, while these methods excel in computational efficiency in certain cases, reduced subspace methods can restrict the range of deformations; and contact processing that occurs at the fine scale \cite{trusty2024trading} can limit simulation performance. These methods can lack support for previewing and progressive refinement, which are essential for offline computations that require high-resolution simulations for a final high-fidelity output. More critically, adaptive schemes that modify degrees of freedom on the fly can struggle with consistency when transitioning between different resolutions. As such, these methods are fundamentally different from the goals of our work, which are to provide a progressive framework that balances efficiency and consistent uniform refinement in contact-rich simulations.
\section{Method}

\subsection{Volumetric Hierarchy}
\label{sec:vol-hierarchy}

\subsubsection{Considerations for Progressive Dynamics}
We begin by describing the construction of the volumetric mesh hierarchies that Progressive Dynamics will apply in its multilevel solver. Our objective is to create a hierarchy of volumetric meshes (triangle meshes in 2D or tetrahedral meshes in 3D) that represent the same object in space at different levels of detail. Similarly to Zhang et al. \shortcite{zhang2023progressive, zhang2024progressive}, we assume that the hierarchy is constructed starting with an input fine-resolution mesh $\mathcal{M}_L$, which consists of $|\mathcal{V}_L|$ vertices and $|\mathcal{T}_L|$ elements.  The resulting mesh hierarchy is indexed by resolution using the subscript $l \in \{0,1,\dots,L\}$ as $\mathcal{M}_L, \mathcal{M}_{L-1}, \dots, \mathcal{M}_0$. At any level $l$, the undeformed (rest) positions of the mesh nodes are denoted by $\bar{x}_l \in \mathbb{R}^{3n_l}$, and the deformed positions by $x_l \in \mathbb{R}^{3n_l}$, where $n_l$ is the number of vertices at level $l$. As an analogy, in the shell case described by Zhang et al. \shortcite{zhang2023progressive, zhang2024progressive}, the shell mesh hierarchy is constructed in a top-down manner by recursively applying edge-collapse decimation \cite{garland1997surface} to a high-resolution input mesh. Most importantly, bijective mappings between local patches can be tracked during this process \cite{Lee1998,aksoylu2005multilevel,Liu:2021:SMIP}, and these mappings are then composed to define a custom prolongation operator for curved shell geometry and rest-shape preservation.

However, extension of this approach to volumetric domains is nontrivial. While various tetrahedral decimation methods exist \cite{cignoni2000simplification,staadt1998progressive,renze1996generalized,trotts1999simplification,danovaro2002multiresolution}, they do not naturally generate the local correspondences required to construct prolongation operators. As a result, the hierarchy-and-prolongation strategy used in the shell case cannot be directly applied to volumetric meshes. Due to this limitation, we instead construct the hierarchy by combining robust, off-the-shelf algorithms for surface decimation and volumetric tetrahedralization. Unlike the shell case, our prolongation operators are then defined independently of any tracked correspondences from decimation, as detailed in Section~\ref{sec:prolongation}.

\begin{figure}[t!]
  \centering
  \includegraphics[width=\linewidth,keepaspectratio]{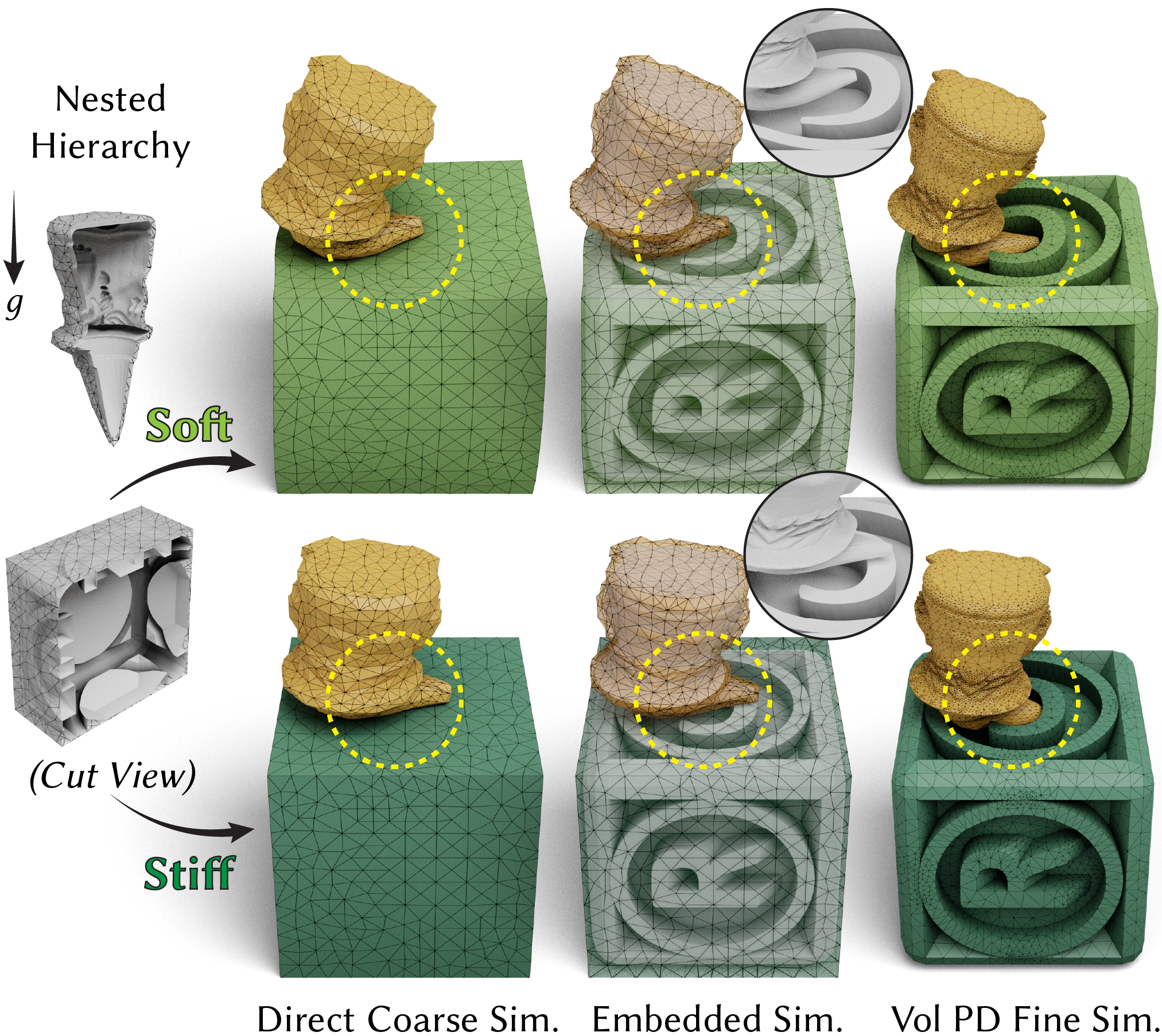} 
  \caption{{\bf Contact geometry resolution and conformity:} We show that \VPD effectively handles contact resolution and geometric conformity for both (Top) soft and (Bottom)  materials, overcoming inaccurate contact handling of embedded simulation that can result in unnatural gaps. (Left) Cut-away view of the embedded hierarchy, showing only the boundary surface mesh of the underlying tetrahedral meshes.}
  \label{fig:dice-and-wizard}
\end{figure}

\subsubsection{Construction of Mesh Hierarchy}
Our approach proceeds in two stages. First, we extract the surface of the input tetrahedral mesh $\mathcal{M}_L$, denoted by $\mathcal{S}_L$, and apply the aforementioned surface mesh decimation algorithm (or other variants that produce high-quality output) to generate a hierarchy of surface meshes $\mathcal{S}_L, \mathcal{S}_{L-1}, \dots, \mathcal{S}_0$. Then, for each surface mesh, we apply robust tetrahedralization tools \cite{hang2015tetgen,hu2018tetrahedral, hu2020fast}, to generate the corresponding volumetric meshes $\mathcal{M}_L, \mathcal{M}_{L-1}, \dots, \mathcal{M}_0$. While TetGen \cite{hang2015tetgen} exactly preserves the boundary of the input surface, we use TetWild \cite{hu2018tetrahedral} as it offers better control over the quality of the output tetrahedral mesh, although it may not preserve the boundary. As a result, the generated tetrahedral meshes may not exactly conform to the original surfaces $\mathcal{S}_l$, which is acceptable for our purposes, since the surfaces already have different boundaries, and strict boundary preservation is not required for subsequent prolongation construction. This procedure provides a practical and systematic way to construct a hierarchy for Progressive Dynamics. In practice, any user-provided tetrahedral mesh hierarchy generated by the aforementioned multiresolution tetrahedralization algorithms can also be used, as long as the meshes approximate the same domain, maintain good element quality (e.g., no degenerate tetrahedra), and do not differ significantly from one another.

\subsubsection{Volumetric Elastodynamics}
We adopt the IPC formulation \cite{Li2020IPC} for volumetric elastodynamics, where elastic deformation, contact, and friction are modeled through a total potential energy composed of three terms: elastic energy ($\Psi$), contact barrier energy ($B$), and frictional dissipation ($D$). 
These components together define the total potential energy at level $l$ as\footnote{The elastic energy $\Psi$ can be modeled using a variety of common volumetric materials, such as St.~Venant–Kirchhoff, Neo-Hookean, or co-rotational elasticity. Contact and friction are handled using IPC barrier potentials \cite{Li2020IPC}.} $E_l = \Psi_l + B_l + D_l.$ With these energies defined at each level, we recall that the \emph{direct} forward time step for frictional volumetric dynamics can be formulated variationally, as the minimization of an incremental potential (IP) \cite{kane1999finite,Li2020IPC}, which combines the level-specific energy $E_l$ with an inertia term. This formulation supports a broad class of implicit time-integration methods. Accordingly, at each discretization level $l$, an implicit Euler time step is computed by minimizing the following energy:
\begin{equation}
\label{eq:IE}
    x_l^{t+1} = \argmin_x \frac{1}{2 h^2} ||x-\tilde{x}_l^t||^2_{M_l} +  E_l(x),
\end{equation}
where $h$ is the time step size, $M_l$ is the mass matrix for level $l$, and $E_l$ is the total potential energy, also  incorporating contributions from body and external forces. Here, $\tilde{x}_l^t = x_l^t + h v_l^t$ denotes the implicit Euler velocity update, where $v_l^t$ is the velocity at the current step $t$. 

\subsection{Progressive Advancement}
\label{sec:prog-advancement}
Directly simulating each level independently (using \eqref{eq:IE}) can cause animation trajectories to diverge across levels due to the lack of inter-level communication. Consequently, coarse-level animations do not provide faithful previews of yet-to-be-run fine-level animations. Progressive Dynamics \cite{zhang2024progressive,zhang2025progressive} addressed these challenges by introducing two novel strategies (subspace proxy energies and prolongated velocity updates) to jointly enable high-quality coarse-level previews of cloth and shell dynamics, which could be progressively refined to produce a final high-resolution animation. Both ingredients leveraged the construction of a prolongation operator $P_{l+1}^l(\cdot)$, which maps quantities from a coarser mesh $\mathcal{M}_l$ to a mesh $\mathcal{M}_{l+1}$ with the next finer resolution, and is established during the construction of the mesh hierarchy. We now explain our volumetric adaptation of these elements. 

\subsubsection{Proxy Energy Model (A free lunch)}
In prior work, the subspace proxy energy was an effective way to mitigate well-known membrane-locking artifacts that plague low-resolution shell simulations, and was essential for allowing coarse-level shell results to provide meaningful previews. The core idea is to reparameterize the elasticity term via $\Psi_L(P^l(x_l))$, allowing coarse-level subspaces to evaluate fine-scale elastic energies $\Psi_L$ through the prolongation operator $P^l(x_l)$, where $P^l$ refers to prolongation from level $l$ to the finest mesh at level $L$. 
Unfortunately, these fine-scale subspace energy evaluations can be more computationally expensive for detailed volumetric simulations, and, therefore, frustrate the goal of fast coarse previews. 

While it is often said that there is no free lunch, somewhat surprisingly, we observed that \VPD produces visually plausible results even without applying subspace elasticity energy parameterization at coarser levels ($l < L$). In other words, direct coarse volumetric simulations already yield good quality results that provide reasonable previews for a wide range of challenging volumetric examples (as demonstrated in Section~\ref{sec:evaluation}). Future work will investigate advanced homogenization techniques to further mitigate artifacts such as shear locking when using linear finite elements, as well as fast schemes for proxy energy evaluation, thereby further strengthening the capabilities of our \VPD framework.

\subsubsection{Velocity Update}
Another key contribution of \cite{zhang2024progressive} is the introduction of a novel velocity update strategy that combines progressive spatial refinement (as introduced by Zhang et al. \shortcite{10.1145/3550454.3555510, zhang2023progressive}) and forward dynamics over time to truly enable frame-by-frame consistent shell dynamics preview. This approach represents the state of the system on a multiresolution spatial grid, where spatial positions $x_l^t$ and velocities $v_l^t$ are defined at each point of the grid $(t,l)$, corresponding to the time step $t \in \{0, 1, \dots, N\}$ and the resolution level $l \in \{0, 1, \dots, L\}$. For the coarsest level $(l=0)$, we similarly compute forward time stepping by solving the coarse-level IP problem (Equation~\ref{eq:IE}) for the entire time interval, using the standard momentum update $\tilde{x}_l^t = x_l^t + h v_l^t$, where $v_l^t = (x_l^t - x_l^{t-1}) / h$ under the implicit Euler method. This process generates the preview state $(x^t_0, v^t_0)$ for all $t \in \{0, 1, \dots, N\}$, which forms the base row of the Progressive Dynamics solution grid.

For finer levels $l > 0$, velocity updates need special attention with important trade-offs for different options. Zhang et al. \shortcite{zhang2025progressive} recently identified 
the \emph{\textbf{VelPro}} velocity update as a stable and high quality update option 
with a prolonged diagonal update at grid points $(t+1, l+1)$ of
\begin{align}
\begin{split}
    \hat{x}_{l+1}^{t}
&= x_{l+1}^{t} + h \, \big(V_{l+1}^l(x_l^t)\big) v_l^t  \\
&= x_{l+1}^{t} + \big(V_{l+1}^l(x_l^t)\big) (x_l^t - x_l^{t-1}),
\end{split}\label{eq:r2}
\end{align}
where $V_{l+1}^l(x) = \nabla P_{l+1}^l(x)$ is the velocity prolongator, i.e., the Jacobian of the position prolongation operator and $v_l^t = (x_l^t - x_l^{t-1}) / h$. This step serves as the key ingredient to promote smooth progression across levels while maintaining frame-wise consistent dynamics. Finally, time-step advancement for finer levels at grid points $(t+1,l+1)$ is achieved by solving the progressive IP problem:
\begin{align}
\label{eq:prog-solve} 
\begin{split}
	x_{l+1}^{t+1} = \argmin_{x} \frac{1}{2 h^2} \|x 
	- \hat{x}^{t}_{l+1} \|^2_{M_{l+1}} + E_{l+1}(x).
\end{split}
\end{align}

\subsubsection{Consistency Penalty (Optional)}
Moreover, Zhang et al.~\shortcite{zhang2025progressive} observe that while the \VO integration method generally produces stable, continuous, and consistent results, per-frame geometric consistency may break in certain extreme cases, such as using large time steps together with a large number of resolution levels (e.g., $h = 0.04$s and $L = 8$). To address this, they proposed adding a small quadratic consistency bias term to the total energy, which enables user-controlled trade-offs between consistency and enrichment. The modified potential energy at level $l$ is then defined as:
\begin{align}
\label{eq:quadric-penalty}
W_l(x) = E_l(x) + w ||x - P_{l}^{l-1} x^{t+1}_{l-1}||_{M_{l}}^2,
\end{align}
where $w \geq 0$ is the consistency penalty weight, and the mass-weighted quadratic penalty encourages agreement between the current level’s solution and the prolonged solution from the next coarser level, $P_{l}^{l-1} x^{t+1}_{l-1}$. Given the inherent properties of Progressive Dynamics, even a very light penalty term is often sufficient to substantially enhance consistency. However, to be clear, {\em we achieve visually plausible and consistent LOD results without using the penalty term for all examples in this paper} (modulo Figure~\ref{fig:increasing-weights}); i.e., we set $w = 0$ throughout. However, we acknowledge that there may be cases where users find the penalty helpful to further improve consistency. We detail the influence of this penalty in the Limitations Section (\S\ref{sec:conclusion}) and illustrate its effects in Figures~\ref{fig:increasing-weights} and \ref{fig:consistency-error}.

\subsection{Prolongation of Volumetric Data}
\label{sec:prolongation}
\setlength{\columnsep}{0.5em}
\setlength{\intextsep}{0em}
\begin{wrapfigure}{r}{65pt}
\includegraphics[width=65pt]{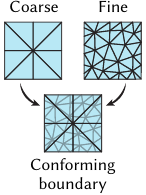}
\end{wrapfigure}

\subsubsection{Motivation} With these core components of our volumetric framework in place, an important remaining question is how to effectively construct prolongation operators to map quantities between nonconforming tetrahedral meshes at different levels. Specifically, as in the shell case, we aim to define a (preferably) linear prolongation operator $P_{l+1}^l$ that maps quantities from the tetrahedral mesh $\mathcal{M}_l$ to $\mathcal{M}_{l+1}$. We start by considering the simplest case: a two-level hierarchy \emph{with} conforming boundaries. In this idealized scenario, where the meshes are perfectly aligned, the prolongation operator can be easily constructed using barycentric coordinates \cite{schneider2017theory} or an $L^2$ projection operator \cite{leger2014updated, vavourakis2013assessment}. 

However, this boundary-conforming assumption is too restrictive for our framework. For complex geometries found in practical applications, with detailed surfaces and intricate structures, enforcing such constraints when constructing the hierarchy is both impractical and computationally inefficient. Hence, to build prolongation operators for the general case for shapes with nonconforming boundaries, we propose a practical extrapolation algorithm to address this problem. 

In the following, we first detail the algorithm using simple barycentric coordinates as an example; a setting similar to \cite{Georgii:2006:MG4RTDefo} but with extensions for robust vertex-element binding. However, the choice of prolongation operator is in fact flexible: we will show that a range of off-the-shelf interpolants commonly used in embedded simulation are compatible with our \VPD framework and can be used as ``plug-and-play'' options with minimal modification. In Section~\ref{sec:alternative-interpolants} we detail our framework and introduce additional interpolants based on Biharmonic Coordinates (\ref{sec:prolongBiharmonic}) and Phong Deformation (\ref{sec:prolongPhong}).

\subsubsection{Robust Boundary Extrapolation (Barycentric Case)}

\begin{figure}[t!]
  \centering
  \includegraphics[width=\linewidth,keepaspectratio]{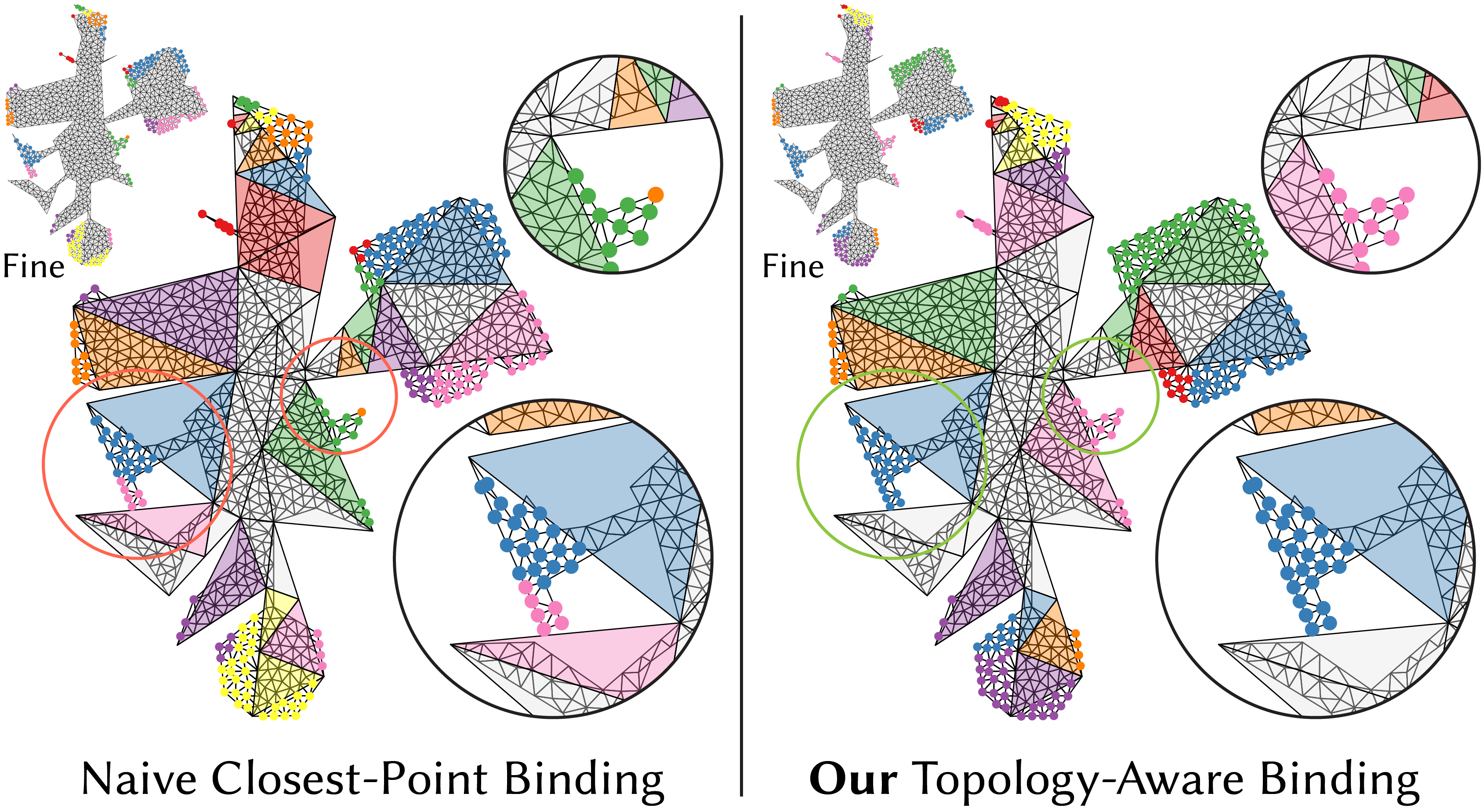} 
  \caption{{\bf Robust vertex-element binding:} We visualize the binding between a fine mesh and coarse elements, where dots of the same color represent fine vertices bound to the same coarse element. The na\"ive approach of closest-point binding using Euclidean distance often binds fine vertices to geodesically distant coarse elements, leading to bindings that are not semantically meaningful and introducing simulation artifacts. In contrast, our improved binding method ensures robust and topologically coherent connections.}
  \label{fig:binding-viz}
\end{figure}

To address the challenge of non-conforming boundaries, we propose a simple yet highly effective barycentric extrapolation method for constructing prolongations that are well-suited for volumes. Similar techniques have been widely applied for embedded simulation, where enforcing a perfectly embedded hierarchy is often too restrictive or impractical\ \cite{kharevych2009numerical, James2021, Georgii:2006:MG4RTDefo}. However, we note that while this application is well-used, to our knowledge, there is no existing literature that details a practical algorithm for robust extrapolation and its implementation.

To bridge this gap, we propose a practical method for constructing a suitable prolongation operator for \VPD, using barycentric extrapolation for fine-mesh boundary vertices that lie outside coarse elements. Concretely, our goal is to associate every fine-mesh vertex with a corresponding coarse element that allows the computation of barycentric coordinates, similar to the conforming case. For fine vertices already contained within a coarse element, we simply assign their current coarse element for binding. For those outside the coarse mesh, we identify a suitable nearby coarse element to bind to, and apply negative barycentric coordinates for extrapolation. Once these bindings are established, barycentric coordinates can be computed for all these vertices, and the prolongation matrix, although it will contain negative entries, functions identically to the conforming case discussed earlier. 

\paragraph{Filtering bindings using geodesic distance} However, a major challenge with na\"ive binding methods, such as closest-point binding based on Euclidean distance \cite{Georgii:2006:MG4RTDefo}, is their lack of awareness of the mesh topology. This, in turn, can lead to fine vertices being mistakenly bound to nearby but geodesically distant coarse elements. This often results in topologically incorrect bindings and introduces dramatic animation artifacts. To mitigate this issue, we present a novel robust binding algorithm that avoids these misbindings (see Figure \ref{fig:binding-viz}). The core idea is that neighboring fine vertices should not be bound to geodesically distant coarse elements, as this would be clearly counterintuitive. Instead, our algorithm uses neighborhood information to infer coarse element bindings while incorporating topological awareness.

\subsubsection{Prolongation Construction (Barycentric Case)} 
Consider a simple two-level hierarchy consisting of a coarse mesh \(\mathcal{M}_C\) with vertices \(\mathcal{V}_C\)  and tetrahedra \(\mathcal{T}_C\), and a fine mesh \(\mathcal{M}_F\) with vertices \(\mathcal{V}_F\) and tetrahedra \(\mathcal{T}_F\), representing the same shape. The algorithm assigns fine mesh vertices \(\mathcal{V}_F\) to their corresponding coarse elements in \(\mathcal{T}_C\), which are then used to compute barycentric coordinates. First, it precomputes the vertex-vertex adjacency for the fine mesh \(\mathcal{M}_F\) and the element-element adjacency for the coarse mesh \(\mathcal{M}_C\), to build the necessary connectivity graphs. For each fine vertex, the algorithm checks whether it is already within a coarse element. If so, the vertex is assigned directly to that element. This binding is treated as the ground truth, i.e., the correct correspondence. No misbinding occurs if a fine vertex already lies inside a coarse element, since enforcing the non-misbinding condition (which ensures geodesic consistency) is typically manageable during hierarchy construction (e.g., decimation or tetrahedralization with proper constraints). In practice, we observed no such issues across our examples. Any fine vertex that falls outside the coarse mesh is collected and marked as unassigned for further processing.

\setlength{\columnsep}{0.5em}
\setlength{\intextsep}{0em}
\begin{wrapfigure}{r}{69pt}
   \includegraphics[width=69pt]{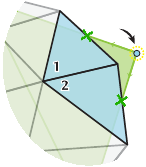}
\end{wrapfigure}
The algorithm then iteratively processes unassigned vertices. It first selects a vertex (among possibly many) with the largest number of neighbors already assigned to coarse elements. For each selected vertex, ray-mesh intersection tests are performed against the coarse mesh surface to identify first-hit intersections. Rays are cast along the incident edges of the fine vertex toward the opposite endpoints (see inset). If an intersection exists, the closest intersected triangle is assigned as the vertex's home coarse element. Otherwise, the algorithm gathers home triangles from the vertex's neighbors and assigns the closest one as its home coarse element. Once assigned, the update propagates to neighboring vertices in the fine mesh adjacency graph. The process repeats until all fine vertices are assigned, resulting in a complete mapping of \(\mathcal{V}_F\) to \(\mathcal{T}_C\). 

It is useful to note that, although we have described the two meshes as fine and coarse and built the binding from the fine vertices to the coarse elements, their roles are interchangeable. The same binding process can be applied in the reverse direction (coarse elements to fine vertices) using the same method, which is, for instance, useful when binding with Biharmonic Coordinates. Finally, while this method is described for a two-level hierarchy $(\mathcal{V}_C, \mathcal{T}_C)$ and $(\mathcal{V}_F, \mathcal{T}_F)$, it can be trivially extended to an $L$-level hierarchy by applying the same procedure iteratively between adjacent levels.

\subsubsection{Prolongation-related Error Analysis} Since our method uses negative barycentric weights to construct the prolongation matrix, it is important to show that the effects of this extrapolation remain bounded, that is, the entries of the prolongation matrix $P$ exert only a limited influence on the finer states. In particular, we must ensure that the effect of negative weights itself does not accumulate over time or lead to unbounded growth in \VPD. To support this, we provide an analysis in Appendix \ref{app:errorAnalysisForFunAndProfit}, demonstrating that our use of negative barycentric weights remains safe and well behaved, in addition to our animation results.

\begin{figure}[t!]
  \centering
  \includegraphics[width=\linewidth,keepaspectratio]{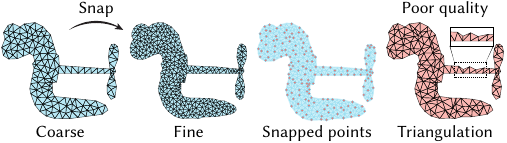} 
  \caption{{\bf Snapping coarse boundary vertices to fine-scale geometry} are practical workarounds for Biharmonic Coordinates but can degrade the triangulation quality of the original coarse mesh. Instead, we generalize Biharmonic Coordinates to accommodate control points that are not constrained to fine mesh vertices.}
  \label{fig:snapping-points}
\end{figure}

\subsection{Beyond Barycentric Prolongation}
\label{sec:alternative-interpolants}
The Progressive Dynamics framework is general enough to support a variety of interpolants for prolongation. We show that several commonly used interpolants in embedded simulation, each of which can benefit from the aforementioned extrapolation strategy, can be integrated into our prolongation framework as ``plug-and-play'' components, with only minor modifications to their original formulations. 
In this section, we detail the construction of prolongation operators using two alternate interpolation schemes: Biharmonic Coordinates \cite{wang2015linear} and Phong Deformation \cite{James2021} as examples. These alternative embedding methods do not improve finest-level \VPD results over linear barycentric coordinates. Rather, we demonstrate that they integrate naturally into our \VPD framework to yield \emph{higher-quality previews}.

\subsubsection{Prolongation using Biharmonic Coordinates}
\label{sec:prolongBiharmonic}
Skinning-based methods that utilize generalized barycentric coordinates have been widely used to smoothly deform embedded geometry by treating coarse shapes as control rigs \cite{Jacobson2011, joshi2007harmonic, Ju2005, floater2003mean, wang2015linear}. While effective for applications such as character articulation and cage-based deformation, these techniques are typically designed for specific control geometries (``cages'') and are not easily adaptable to arbitrary coarse meshes. Taking Biharmonic Coordinates as an example, the original formulation allows a set of coarse vertices, where the simulation is executed, to act as control points. These coarse vertices are then used to interpolate a high-resolution output via Biharmonic Coordinates for smooth deformation.

However, a key limitation of the original Biharmonic Coordinates \cite{wang2015linear} that prevents its direct integration into our framework is the assumption that all coarse control vertices must coincide with fine mesh vertices. This constraint is difficult to satisfy within our tetrahedral mesh hierarchy, where enforcing such alignment during construction is impractical. A common workaround, snapping the coarse vertices to the nearest fine ones, often leads to deteriorated mesh quality, including skinny and degenerate triangles, especially when the coarse and fine mesh resolutions are similar (see Figure~\ref{fig:snapping-points}). Therefore, it is necessary to modify Biharmonic Coordinates to function effectively with our mesh hierarchy.

We observe that Biharmonic Coordinates can be extended to support non-coincident control points by modifying the binary selector matrix $S$ (as described and defined in Equation 2 in \cite{wang2015linear}) to incorporate barycentric interpolation. Furthermore, for vertices that lie outside the fine tetrahedral mesh, we can further apply the extrapolation strategy discussed previously. This adaptation allows us to reformulate the computation of Biharmonic Coordinates $W$ with non-coincident control points as follows:
\begin{align}
\min_{W} \ \mathrm{trace}\Big(\frac{1}{2} W^\top A W\Big) \quad \mathrm{subject \ to} \quad BW = I,
\end{align}
where $A$ is a squared Laplacian smoothness energy (a positive semi-definite matrix) and $B$ is the linear interpolation matrix constructed using barycentric coordinates such that $B \, V_f = V_c$, which may include negative weights for boundary extrapolation. This reformulation enables the use of Biharmonic Coordinates in the context of our tetrahedral mesh hierarchy while addressing the challenges posed by non-coincident control points and non-conforming boundaries. We show that $V_f = W \, V_c$, which enforces interpolation constraints at selected control vertices (coarse vertices, in this case) and is equivalently satisfied by solving the above constrained quadratic programming problem; we refer the reader to Proposition \ref{prop} in Appendix~\ref{app:biharmonicProof}.

Finally, we show that our modified Biharmonic Coordinates maintain useful sparsity and locality in practice (see Figure~\ref{fig:locality-of-modified-bc}).

\begin{figure}[t!]
  \centering
  \includegraphics[width=\linewidth,keepaspectratio]{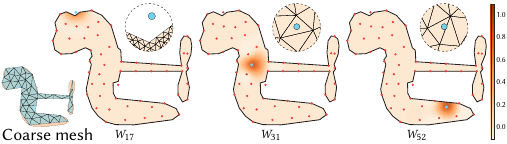} 
  \caption{{\bf Sparsity and locality of modified Biharmonic Coordinates:} We provide a visualization showing that our modification maintains the sparsity and locality of Biharmonic Coordinates with non-incident control points.}
  \label{fig:locality-of-modified-bc}
\end{figure}

\subsubsection{Prolongation using Phong Deformation} 
\label{sec:prolongPhong}
Another alternative interpolant for prolongation that we could consider is Phong Deformation \shortcite{James2021}, a simple, robust, and practical vertex-based quadratic interpolation scheme that offers improved visual quality over linear interpolation. Although it remains only \( C^0 \)-continuous as a linear interpolation, it substantially reduces visual artifacts in embedded geometry. The core idea is to first average element-based linear deformation models to the vertices, then barycentrically interpolate these vertex models while blending with the traditional linear interpolation. Meanwhile, Phong Deformation does not assume a strictly embedded hierarchy, which means that fine boundary elements are not necessarily enclosed by coarse elements. This makes our proposed approach a natural fit: we can construct the hierarchy using our method as described in Section~\ref{sec:vol-hierarchy} and apply extrapolation to handle boundary vertices located outside the fine mesh. Our framework then computes the blended vertex deformation gradient and element deformation based on the corresponding bound elements, enabling seamless integration of Phong Deformation into our pipeline.

Moreover, although Phong Deformation itself offers a fast, robust, and easy-to-implement alternative to linear interpolation—with notably improved visual results for irregular tetrahedral meshes—it is ultimately a simple extension of linear interpolation for embedded mesh deformation. As such, it lacks contact awareness and higher-order accuracy. Thus, our \VPD framework—designed to be compatible with the Phong Deformation as the interpolant—can also be viewed as a physical relaxation tool for postprocessing Phong Deformation results, correcting embedding-related artifacts such as self-intersections and recovering additional physical effects such as high-frequency motion.

\section{Evaluation}
\label{sec:evaluation}
We use Apple's Accelerate solver for linear solves and Eigen for the remaining linear algebra routines \cite{eigenweb}. We report example statistics and timings on an Apple MacBook Pro with an M4 Max chip and 128 GB of RAM.

\subsection{Comparisons}

\begin{figure}[t!]
  \centering
  \includegraphics[width=\linewidth,keepaspectratio]{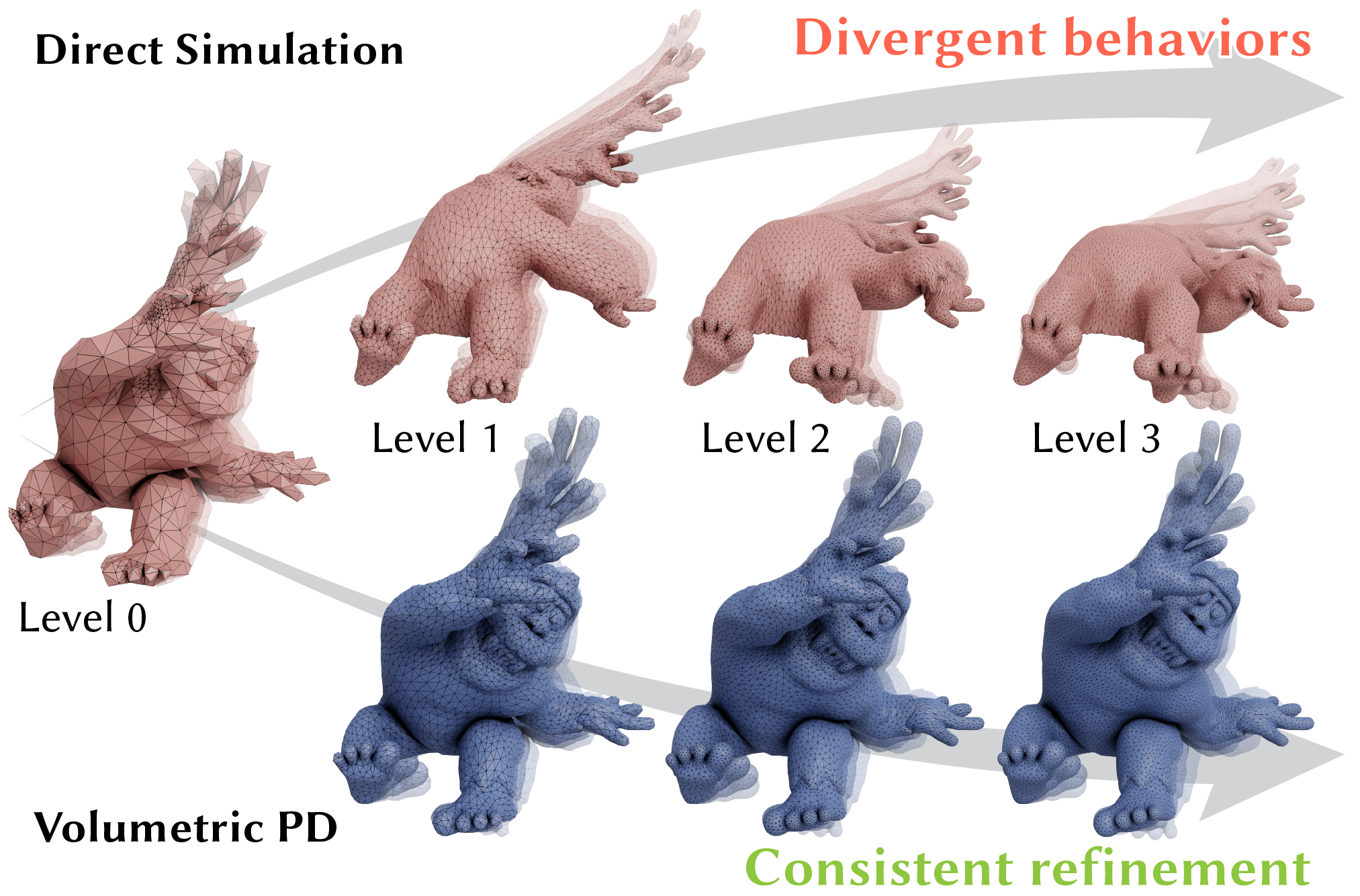} 
  \caption{{\bf Comparison with direct simulation:} A constrained character is subject to a strong impulsive loading that causes a rapid motion. (Top) Direct simulation responses quickly diverge across resolution levels, whereas (Bottom) \VPD produces similar responses across levels and closely matches the coarse level-0 preview simulation. }
  \label{fig:ape-divergence}
\end{figure}

\paragraph{Direct Simulation.} For some of our benchmark examples (see Figures~\ref{fig:beachball-direct} and \ref{fig:ape-divergence}), we perform the corresponding direct simulations at each resolution level using the same underlying IPC simulator~\cite{Li2020IPC} integrated into our testing framework. As in prior comparisons between shell-based Progressive Dynamics~\cite{zhang2024progressive} and direct C-IPC simulations~\cite{Li2021CIPC}, we observe a significant divergence across resolutions when using direct simulation. Figure~\ref{fig:ape-divergence}, together with our supplemental video, highlights substantial discrepancies in shape and trajectory between different resolution simulations, which emerge after only a short period of simulation time. Despite identical simulation setups—including the same material parameters and initial conditions—these IPC results exhibit markedly different material behavior and motion. This sensitivity to resolution is a well-known limitation of direct simulation methods and is not unique to high-fidelity finite element schemes \cite{zhang2024progressive, zhang2025progressive}. In contrast, \VPD yields qualitatively consistent animations across all levels of detail, with progressively enhanced visual fidelity at higher resolutions. Our figures and video results demonstrate frame-by-frame evidence of consistent behavior in both fine geometric details and overall motion.

\paragraph{TRACKS} To further evaluate the benefits of \VPD in terms of producing rich and consistent LOD enrichment, we compare it against a TRACKS-style method adapted from the original formulation for shell simulation~\cite{bergou2007tracks}. Although TRACKS was originally developed for shell animations, its core idea, enforcing position-based constraints to track a target animation, naturally extends to volumetric settings. In our experiment (Figure~\ref{fig:tracks-comparison}), we use the same slit-array object as in Figure~\ref{fig:increasing-weights} and construct a two-level hierarchy using $h = 0.01$s, where the fine-resolution simulation is constrained to follow the prolonged coarse result via a per-vertex penalty: $w \|x - P_{l}^{l-1} x^{t+1}_{l-1}\|^2_{M_l}$, with $P_{l}^{l-1}$ denoting the prolongation matrix. We find that the effectiveness of this TRACKS-style method is highly sensitive to the choice of the penalty weight $w$. When $w$ is not carefully tuned, the fine-level simulation can diverge from the coarse solution, leading to ghost force artifacts and degraded motion quality. In the same two-level setup, unlike the TRACKS-style approach, \VPD does not rely on penalty terms to enforce cross-resolution coherence. It naturally produces consistent animation at the fine level, with increasing geometric fidelity and stable motion as the resolution improves. See Figure~\ref{fig:tracks-comparison} and our supplemental videos for details. 


\begin{figure}[t!]
  \centering
  \includegraphics[width=\linewidth,keepaspectratio]{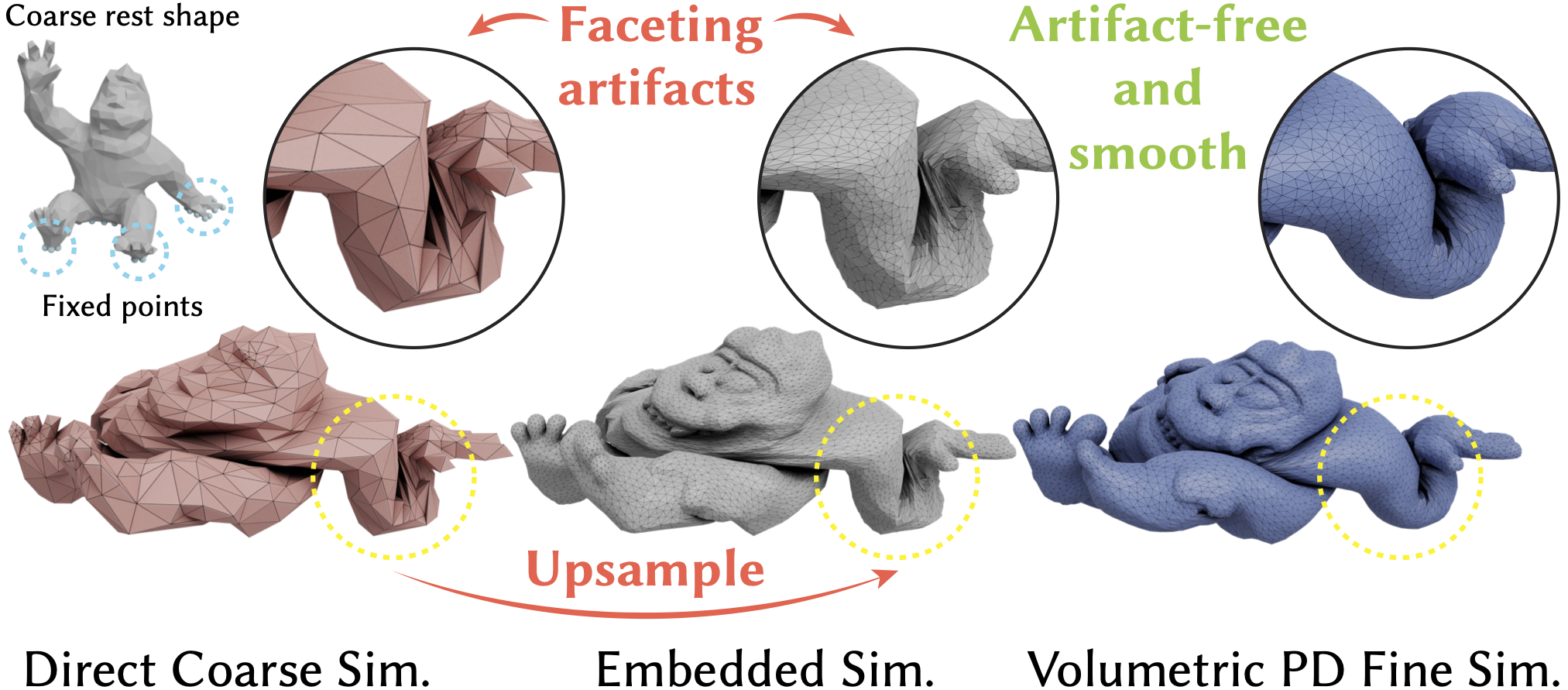} 
  \caption{{\bf Comparison with embedded simulation:} (Left) A coarse direct simulation used for (Middle) embedded deformation of a fine render mesh introduces visible ``faceting'' artifacts. (Right) In contrast, progressive refinement of the coarse simulation using \VPD produces high-quality fine-scale deformations (see text for discussion).
  }
  \label{fig:compare-to-embedded}
\end{figure}

\paragraph{Embedded Simulation.} As shown in Figure~\ref{fig:compare-to-embedded}, running a coarse simulation followed by direct embedding often leads to faceting artifacts, where large, planar deformations persist even at fine resolutions. In tighter contact scenarios (Figures~\ref{fig:dice-and-wizard} and \ref{fig:bc-comparison}), embedded simulations—broadly defined to include approaches such as applying skinning methods to upsample direct simulation results—can also lead to gaps and self-intersections, as the fine mesh may not align with the coarse geometry, particularly on curved surfaces. These issues reflect a broader limitation of commonly used interpolants in embedded simulation, whether linear, higher-order, or skinning-based, which lack physical awareness and fail to capture fine-scale motion or resolve contact accurately. Our \VPD framework addresses this by treating prolongation as a modular component: these interpolants can therefore be used as ``plug-and-play'' operators within a multilevel simulation, with \VPD serving as a physical relaxation stage that enriches embedded results with detailed dynamics and improved contact handling.

\subsection{Benchmark Examples}
To demonstrate the effectiveness of \VPD, we evaluated it in a wide range of benchmark examples with varying levels of complexity. Unless otherwise specified, we use barycentric coordinates with our boundary extrapolation as the default prolongation method. Our results show that the progressive volumetric simulation framework produces high-fidelity, LOD-consistent animations across resolutions, even under challenging dynamics such as high speeds, large deformations, and frictional contact. We refer to our supplemental video for the full set of animations.

\paragraph{Large Deformation.} To demonstrate that \VPD can robustly handle large extreme deformations over long time spans (up to 800 timesteps for 8 seconds), we begin with a simple setup: a single gorilla character simulated using a four-level tetrahedral hierarchy with timestep size $h = 0.01$ and relatively soft material ($Y = 2 \times 10^4$ Pa). We also enforce fixed Dirichlet conditions on his bottom, one hand, and both feet. At the initial frame, we apply a strong upward impulse that stretches the character to an extreme configuration, inducing rapid oscillatory motion. As shown in Figure~\ref{fig:ape-divergence} and our supplemental video, \VPD maintains stable and consistent behavior across all levels, exhibiting strong frame-by-frame agreement and overall trajectories even under severe deformation.

\begin{figure}[t!]
  \centerline{\includegraphics[width=\hsize]{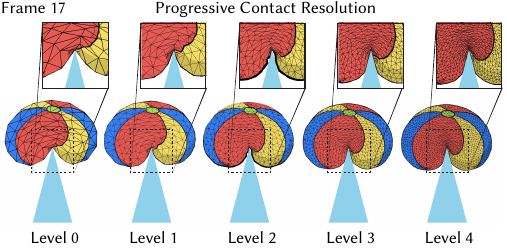}} 
  \caption{{\bf Ball on sharp spikes:} We show that \VPD effectively resolves contact with sharp spikes, with higher resolutions producing boundary surfaces that more precisely conform to the spike geometry.}
  \label{fig:beachball-spike}
\end{figure}

\paragraph{Sharp Spike Contact.} As shown in Figure~\ref{fig:beachball-spike}, we examine a 2D scenario in which a soft beach ball is dropped onto a sharp spike-shaped collider, using a five-level hierarchy with vertex counts of 0.06K, 0.13K, 0.31K, 0.57K and 1.1K and a timestep of $h = 0.01$s. At coarser levels, the limited degrees of freedom do not closely conform to capture the details of the spike's indentation, resulting in an oversimplified deformation. In contrast, finer resolutions produce boundary geometry that more accurately conforms to the collider shape, while preserving the time of maximum compression. This example illustrates how \VPD supports progressively refined contact interactions, enabling an increasingly realistic and detailed volumetric response as resolution improves.

\begin{figure}[t!]
  \centering
  \includegraphics[width=\linewidth,keepaspectratio]{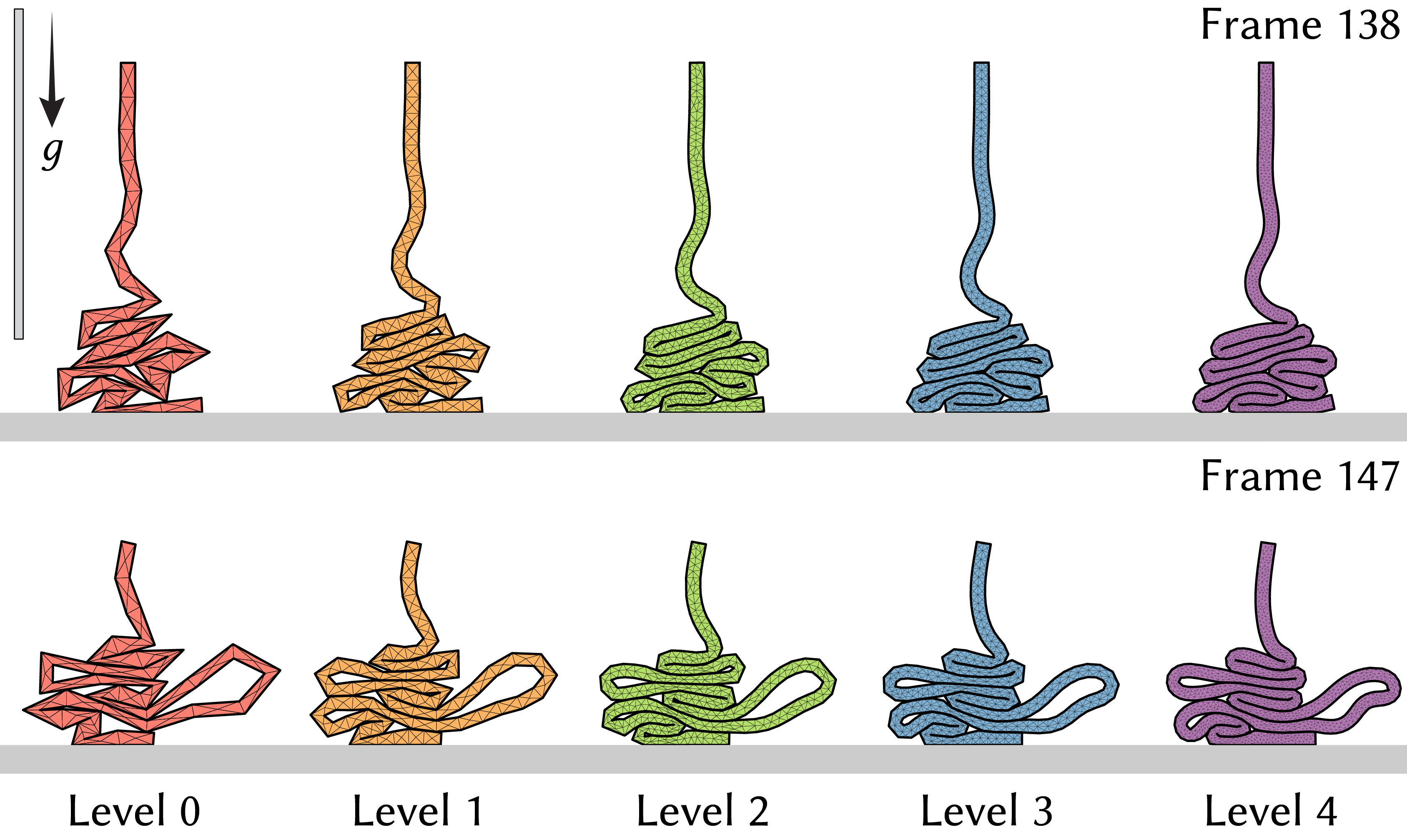} 
  \caption{{\bf Rope drop:} A rope is dropped under gravity, colliding with the ground. At the coarse level, the limited degrees of freedom lead to noticeable kinks, which are gradually refined and resolved into a smooth shape at the finest level as the \VPD progress.}
  \label{fig:rope-drop}
\end{figure}

\paragraph{Rope Drop.}
As a general purpose framework for volumetric animation, \VPD is effective on a wide range of geometries, including both volumetrically dense objects and thin structures, e.g., with high aspect ratios. To illustrate this, we simulate a rope-like object falling under gravity, colliding with the ground, and then tightly coiling, using a five-level hierarchy with vertex counts of 0.1K, 0.2K, 0.4K, 0.8K and 1.6K and a timestep of $h = 0.01$s. As the rope settles, the coarse levels—limited by low degrees of freedom—exhibit noticeable kinks when folding. As the simulation progresses through finer levels, \VPD incrementally refines the geometry, resolving these artifacts into smooth, physically plausible deformations. The final result, shown in Figure~\ref{fig:rope-drop}, shows a well-formed natural pile at the highest resolution. This example highlights the robustness of \VPD{} in capturing realistic dynamics, even in slender volumetric objects, without special treatment or specific tuning for thin structures.

\paragraph{Extreme Coarse Shapes.}
To evaluate \VPD's support of extremely coarse inputs for fast previewing, we present two 2D examples in Figure~\ref{fig:extreme-coarse-shapes}, each built with a four-level hierarchy. The coarsest levels are aggressively decimated to just 0.07K and 0.06K vertices—over 50$\times$ fewer than their respective finest-level counterparts. As a result of this simplification, the coarse shapes differ significantly from the fine-level geometry, exhibiting nonconforming boundaries that capture only rough polyhedral approximations of the fine-level shape's contour. Despite this large discrepancy, \VPD successfully introduces additional physical and geometric details at each level while preserving overall coherence and consistency of motion throughout the animations. This example highlights the robustness of our approach, demonstrating that \VPD can utilize highly simplified coarse inputs without requiring strict geometric alignment across the hierarchy.

\begin{figure*}[t!]
  \centering
  \includegraphics[width=\linewidth,keepaspectratio]{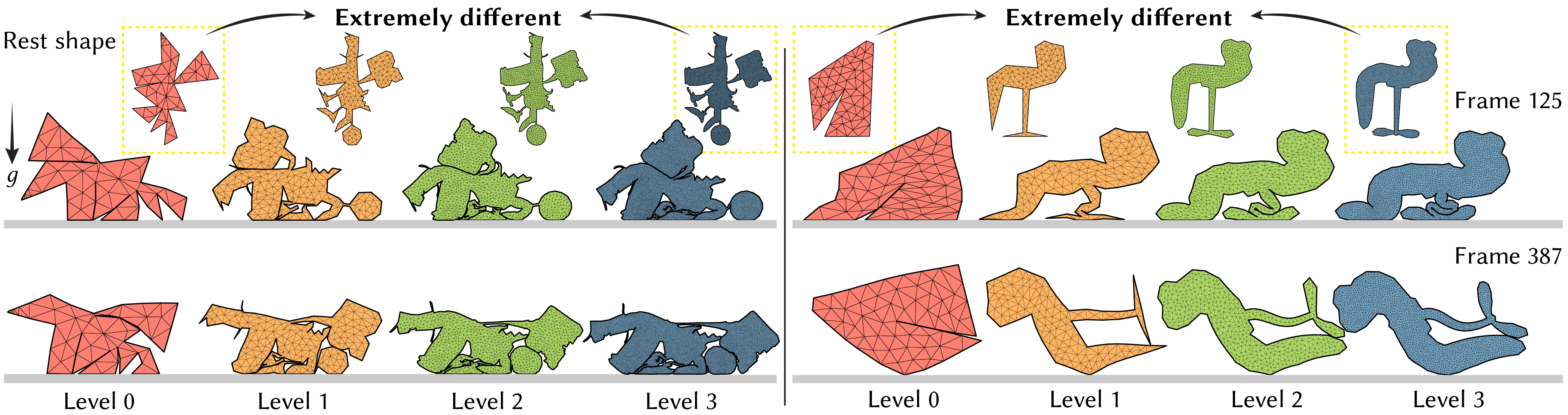} 
  \caption{{\bf Progress even with extreme coarse shapes:} We demonstrate that even in cases involving extreme coarse shapes with significant discrepancies between representations at different levels, \VPD effectively introduce additional physical and geometric details at each step while preserving reasonable geometric consistency.}
  \label{fig:extreme-coarse-shapes}
\end{figure*}

\begin{figure}[t!]
  \centering
  \includegraphics[width=\linewidth,keepaspectratio]{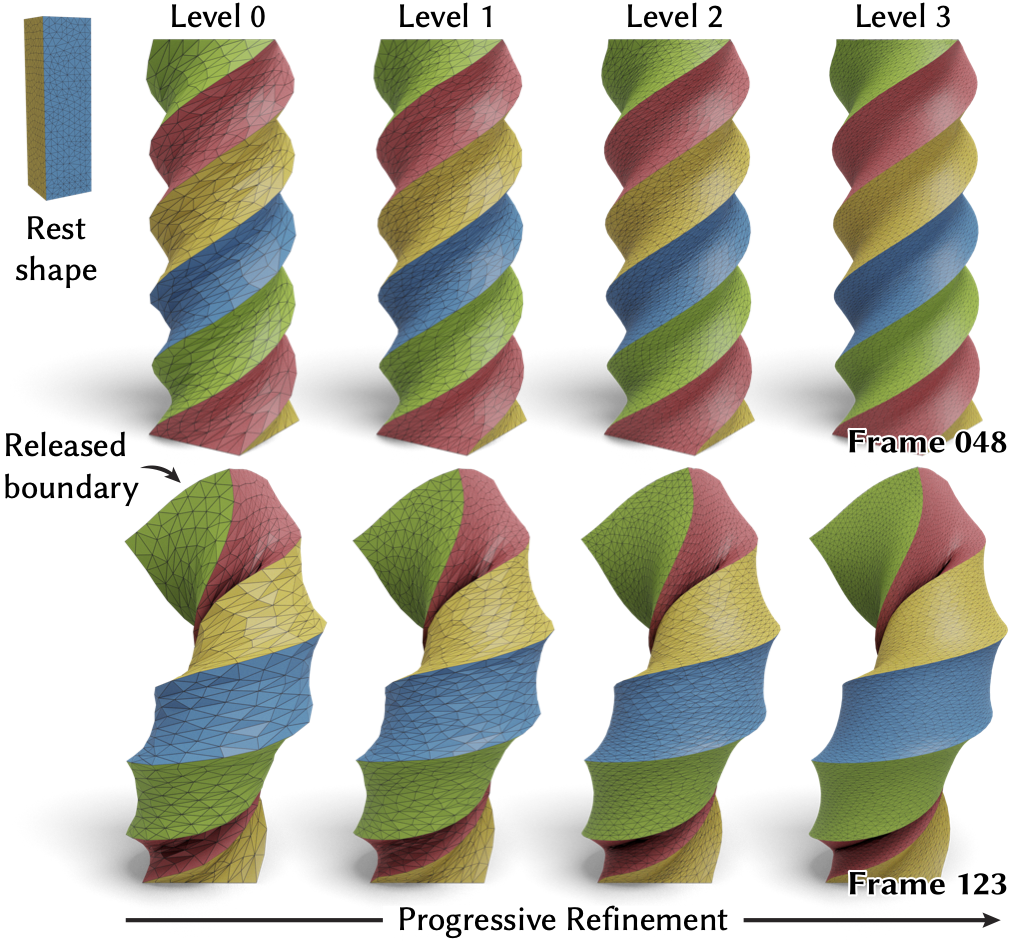} 
  \caption{{\bf Twist test:} (Top) A steadily twisted bar shows good correspondence across resolutions when extremely deformed, but also when (Bottom) it is released and undergoes a rapid untwisting motion. }
  \label{fig:twisting}
\end{figure}

\paragraph{Twisting Behavior.}
As a standard stress test, we include a 3D example in which a bar-shaped object undergoes a large imposed twisting (see Figure~\ref{fig:twisting}). The animation uses a four-level hierarchy with vertex counts of 3.7K, 9.1K, 32K, and 107K, respectively per level, and a timestep size of $h=0.01$s. Twisting is induced by rotating both ends of the bar in opposite directions. Despite the severity of the deformation, \VPD maintains strong frame-by-frame consistency across all levels, while ensuring intersection-free geometries throughout. As resolution increases, the twisted edge geometries are more sharply resolved and the remaining surfaces smoother. When one end of the bar is released at a later frame, the resulting dynamic response exhibits consistent transition timing and overall behavior across all levels, further demonstrating \VPD{}’s robustness under extreme, time-dependent deformation.

\begin{figure}[t!]
  \centering
  \includegraphics[width=\linewidth,keepaspectratio]{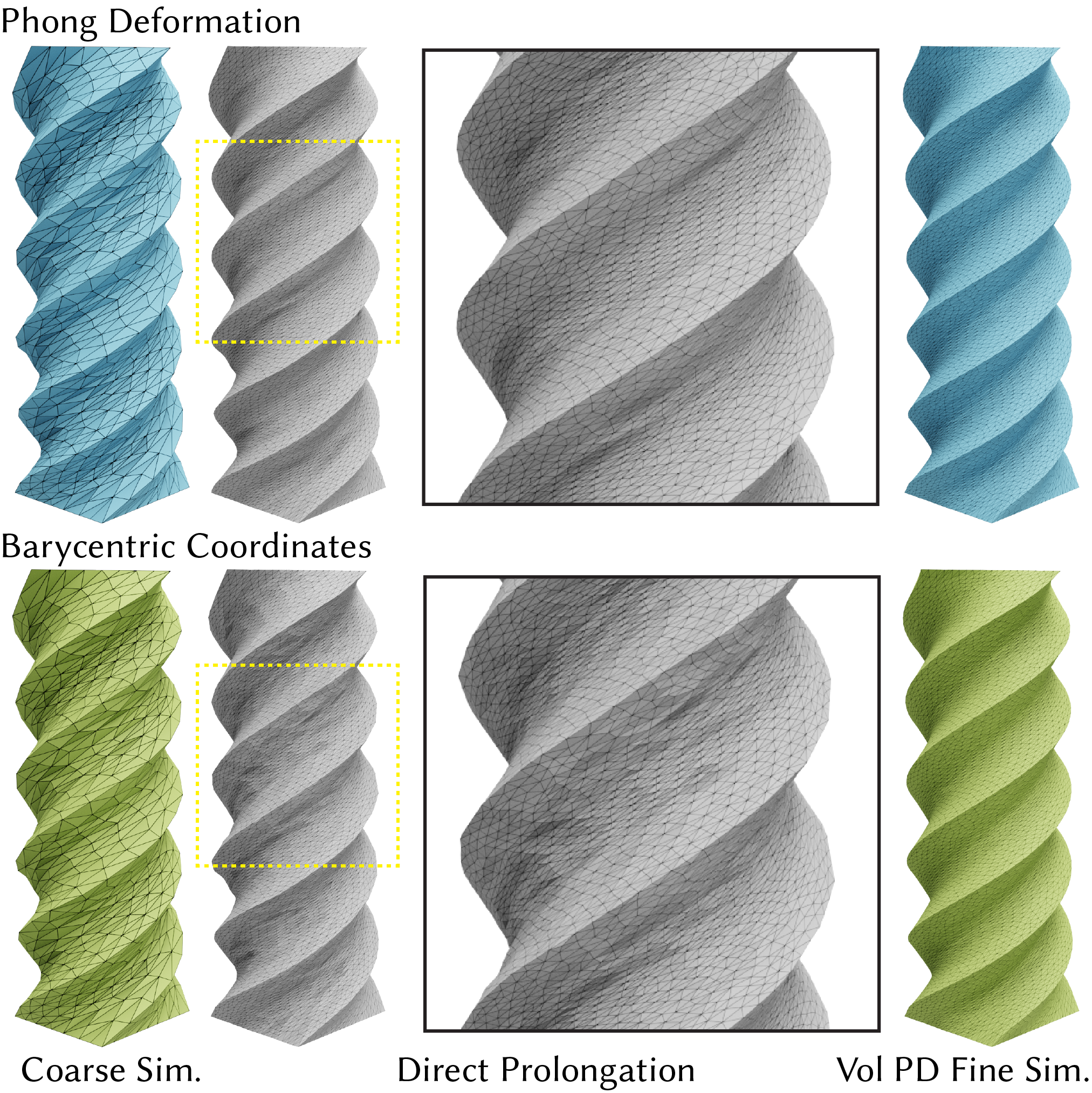} 
  \caption{{\bf Comparison with using Phong Deformation:} (Left) A coarse simulation mesh is directly prolongated using (Top,gray) Phong Deformation and (Bottom,gray) barycentric coordinates to obtain a fast preview, however the latter embedding has more obvious faceting artifacts. In contrast, the final fine-scale Vol PD simulations using Phong Deformation (Top,Right) and barycentric prolongation (Bottom,Right) both have visually comparable high-quality results.}
  \label{fig:phong-comparison}
\end{figure}

\paragraph{Extreme Geometry Differences Across Levels.} As discussed earlier, detailed and curved volumetric objects generally introduce larger geometric discrepancies across levels due to the lack of alignment between coarse and fine representations. Beyond the 2D example in Figure~\ref{fig:extreme-coarse-shapes}, we further demonstrate \VPD{}’s robustness in handling such large discrepancies through a challenging 3D test in Figure~\ref{fig:dice-and-wizard}. Here, a gnome character is dropped onto a letter block collider using a two-level hierarchy. To push the mismatch to an extreme, we force the geometry of the block to be just a simple cube at the coarsest level, while the finest level geometry includes deep, intricate letter engravings embedded into the block's surface that can and will change the simulation's collision dynamics. We test both a soft block collider ($Y = 2 \times 10^5$, shown in the first row) and a stiff one ($Y = 1 \times 10^{8}$, shown in the second row). In both cases, the gnome animation adapts to the geometry at each level, with the fine-level simulation conforming smoothly to the detailed surface, without locking, drifting, or violation of motion consistency. This example highlights \VPD{}’s strong resilience to geometric discrepancies while maintaining consistent and physically plausible behavior across resolutions.

\setlength{\columnsep}{0.5em}
\setlength{\intextsep}{0em}
\begin{wrapfigure}{r}{100pt}
   \includegraphics[width=100pt]{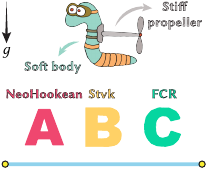}
\end{wrapfigure}
\paragraph{Heterogeneous Materials and Multi-Object Interaction.} Beyond single-object simulations with homogeneous materials, we also demonstrate \VPD{}’s capability in handling heterogeneous materials and complex multi-object interactions. As shown in Figure~\ref{fig:heterogeneous-material}, this 2D example features a soft-bodied worm character (Neo-Hookean, $Y = 2 \times 10^4$), a rigid propeller (effectively stiff with $Y = 2 \times 10^{10}$), three letter-shaped objects (A, B, and C) with increasing stiffness values ($Y = 4 \times 10^4$, $5 \times 10^5$, and $6 \times 10^6$) and different material models (Neo-Hookean, St. Venant–Kirchhoff, and co-rotational elasticity), and a bouncy trampoline. Material consistency across the hierarchy is maintained by first assigning materials to the elements at the coarsest level and then propagating those assignments to the corresponding elements at finer levels. We then define the elastic potential independently at each level based on these assignments. As these objects fall under gravity, collide with the trampoline, and bounce off, \VPD maintains consistent global behavior across resolutions. Deformations become increasingly detailed and realistic with refinement, highlighting \VPD{}’s robustness in resolving complex material behaviors and interactions within a unified multilevel framework. See the accompanying video for the animation.

\subsection{Prolongation Methods}

\begin{figure}[t!]
  \centering
  \includegraphics[width=\linewidth,keepaspectratio]{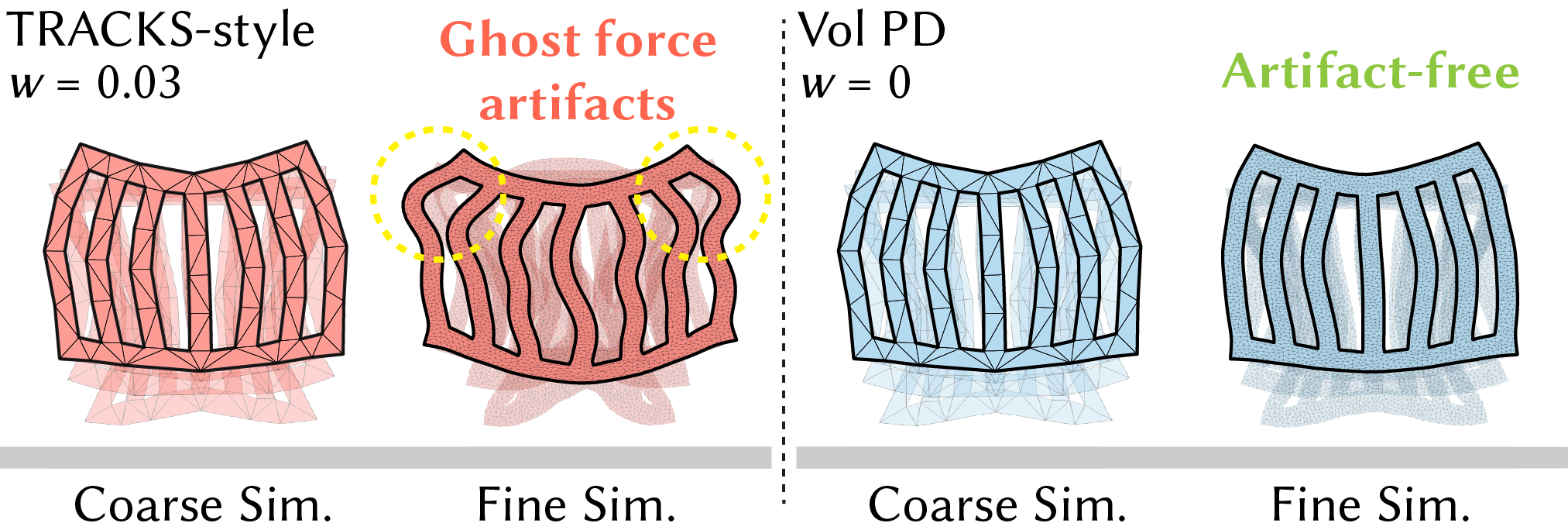} 
  \caption{\textbf{Comparison with TRACKS-style position-based tracking:} (Left) TRACKS-style control can produce undesirable deformable artifacts due to the coarse moment-based control forces, whereas (Right) \VPD estimates a fine simulation that closely resembles the coarse preview simulation. }
  \label{fig:tracks-comparison}
\end{figure}

\paragraph{Biharmonic Coordinates.} We compare \VPD{}'s results using Biharmonic Coordinates \cite{wang2015linear} against the default choice of barycentric coordinates. Using the same rope-drop setup as in Figure~\ref{fig:rope-drop} above, we observe that directly upsampling the coarse animation—particularly in regions of large deformation and severe stretching near corners—using Biharmonic Coordinates results in noticeable bulging artifacts that are not physically plausible. Moreover, because this direct upsampling is not contact-aware, it leads to severe intersection artifacts, as illustrated in Figure~\ref{fig:bc-comparison}. In contrast, incorporating Biharmonic Coordinates as a plug-in method for velocity prolongation during the construction of the inertia term is fully compatible with our framework and yields qualitatively similar refinement results to those obtained with barycentric coordinates. Additional quantitative comparisons are provided in Section \ref{sec:quant}.

\paragraph{Phong Deformation.} Similarly, Phong Deformation \cite{James2021} can likewise be used as an alternative plug-in method for velocity prolongation within \VPD and offers higher prolongation accuracy. While Phong Deformation similarly lacks physical awareness when used for direct upsampling, it produces high-quality results for finer levels when integrated with \VPD, achieving visual quality comparable to the default barycentric approach, as shown in Figure~\ref{fig:phong-comparison}. We refer to Section \ref{sec:quant} for a more detailed quantitative analysis.

\subsection{Quantitative Analysis}\label{sec:quant}
To quantitatively evaluate \VPD{}'s effectiveness and key properties, we adopt the temporal continuity and geometric consistency measures proposed in Zhang et al.~\shortcite{zhang2025progressive}, which assess per-level (``horizontal'') temporal continuity and across-level (``vertical'') geometric consistency over the mesh hierarchy. See Appendix~\ref{sec:metrics} for their detailed definitions.

\begin{figure}[t!]
  \centering
  \includegraphics[width=\linewidth,keepaspectratio]{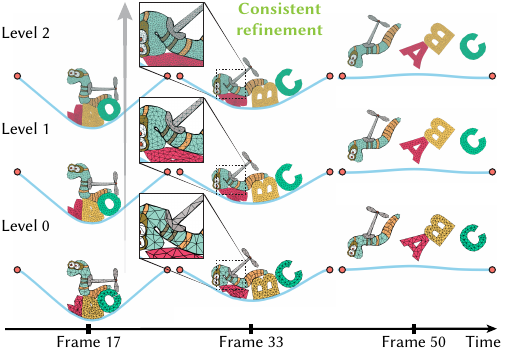} 
  \caption{{\bf Heterogeneous materials:} We show that progressive volumetric dynamics supports heterogeneous materials by simulating a scene that simultaneously involves Neo-Hookean, St. Venant–Kirchhoff, and co-rotational elastic materials of varying stiffnesses (see text for details).}
  \label{fig:heterogeneous-material}
\end{figure}

\paragraph{Continuity Measure.}
To quantitatively analyze and compare continuity preservation across methods, we use the temporal continuity metric proposed by Zhang et al.\shortcite{zhang2025progressive}. This measure offers a reliable means of assessing physical continuity and detecting temporal artifacts in arbitrary physical animations. We apply it to evaluate results generated using barycentric coordinates and Biharmonic Coordinates (see Figure~\ref{fig:bc-metrics}). It effectively demonstrates that using barycentric coordinates and Biharmonic Coordinates for prolongation within \VPD yields similar results in terms of continuity, neither outperforms the other, suggesting that both interpolants are equally effective and well suited for integration into the \VPD framework.

\paragraph{Consistency Measure.} Achieving consistent refinement across resolution levels is a core feature and goal of the Progressive Dynamics framework, and the same holds true in our volumetric extension. To evaluate geometric consistency across levels, we adopt the consistency metric proposed by Zhang et al.~~\shortcite{zhang2025progressive}. This metric essentially measures how well a coarser-level timestep captures the bulk deformation of its corresponding finer-level timestep, assuming that finer levels primarily contribute high-frequency details that do not alter coarse-scale behavior. Figure~~\ref{fig:bc-metrics} compares results generated by barycentric and Biharmonic Coordinates, evaluated using this metric.

Meanwhile, as noted by Zhang et al., while the \VO integration method used in \VPD generally yields stable, continuous, and consistent results, per-frame geometric consistency can still break under some extreme conditions, e.g., large time steps used with a deep hierarchy. To explore this limitation, we construct a targeted stress-test example and evaluate the influence of different consistency penalty weights $w$ (defined in Equation~\ref{eq:quadric-penalty}, with $w = 0, 0.025, 0.2, 0.4, 0.6$ and timestep $h = 0.04$) over an 8-level hierarchy. We apply the consistency metric to these results and discuss the impact of the penalty term in \S\ref{sec:conclusion} Limitations.

\begin{figure}[t!]
  \centering
  \includegraphics[width=\linewidth,keepaspectratio]{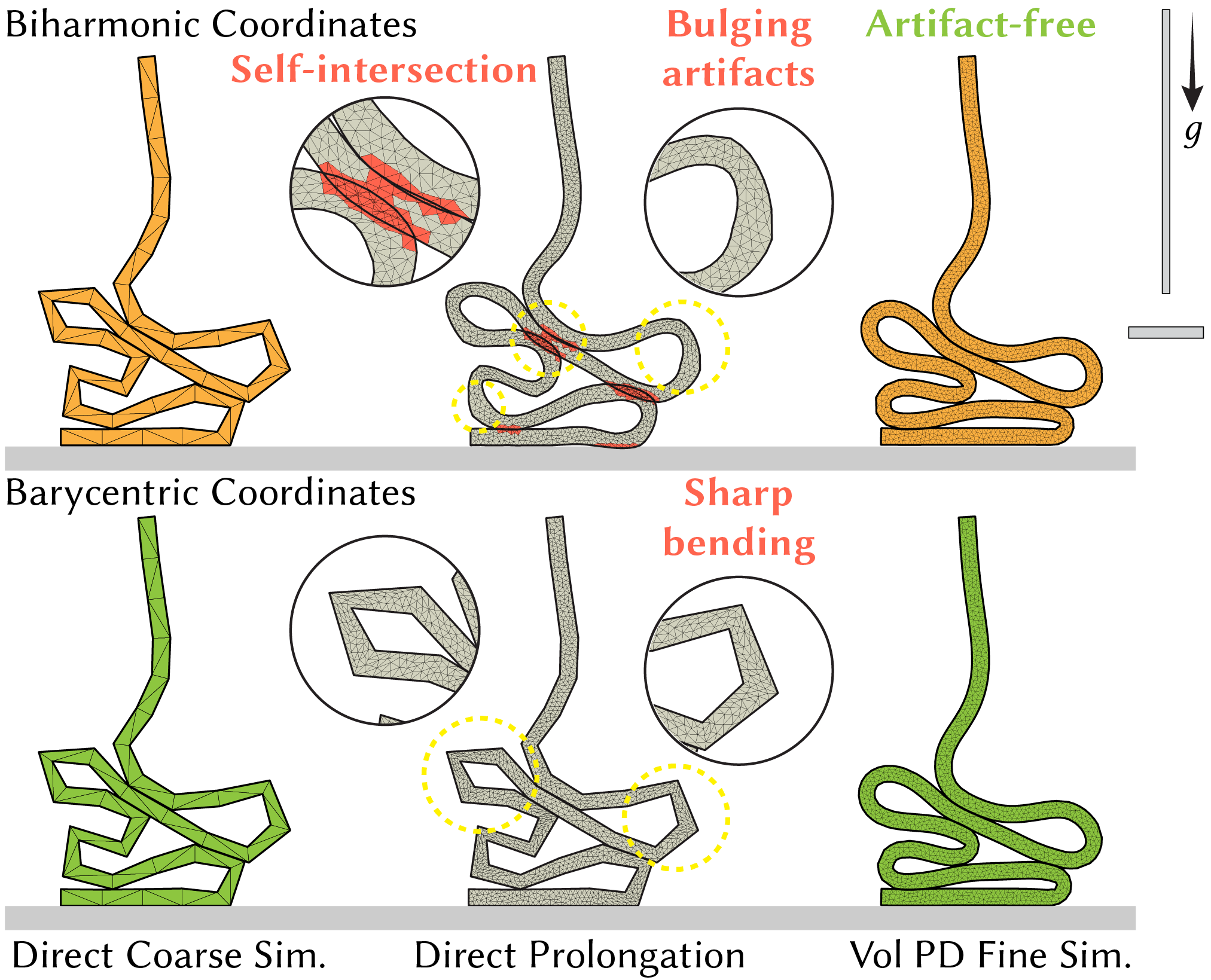} 
  \caption{{\bf Comparison with using Biharmonic Coordinates:} (Left) A coarse simulation of rope drop is only one element wide and in need of refinement. (Middle) Direct prolongation of the coarse sim to the fine scale either using (Bottom) barycentric interpolation retains contacts by exhibits poor smoothness, whereas (Top) Biharmonic Coordinates is smoother but introduces self-intersections and bulging artifacts. (Right) However, \VPD can use both prolongation methods to progressively simulation high-quality fine-scale results which are both smooth and intersection free.}
  \label{fig:bc-comparison}
\end{figure}

\begin{figure}[t!]
  \centering
  \includegraphics[width=\linewidth,keepaspectratio]{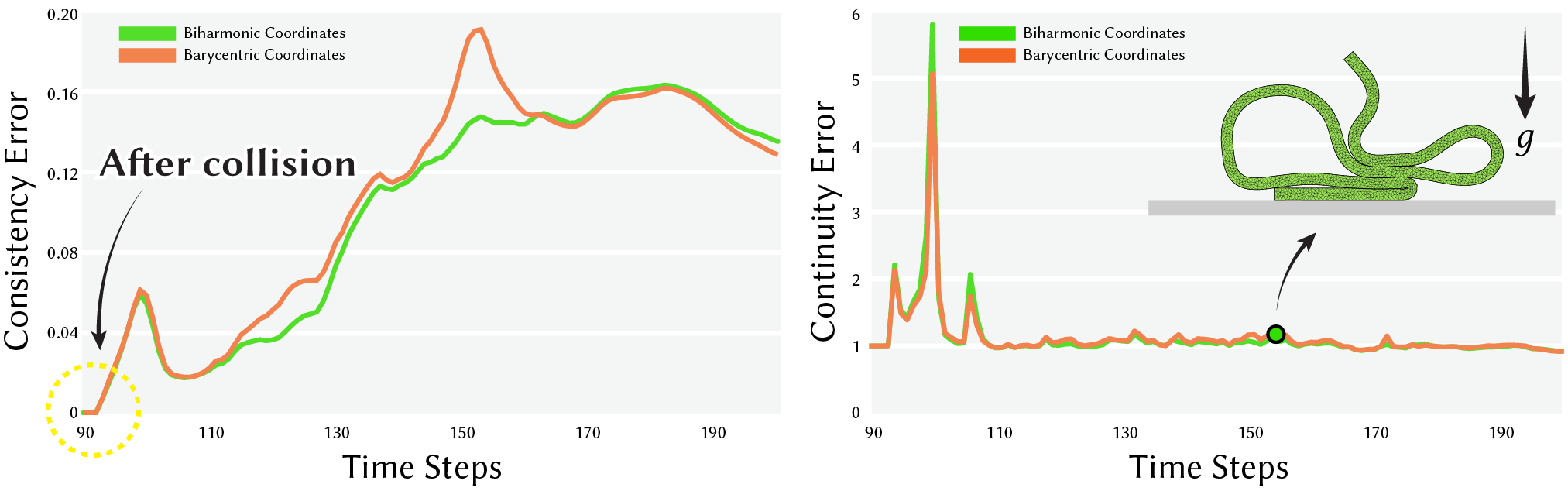} 
  \caption{{\bf Consistency and continuity metrics} are plotted for simulations using barycentric coordinates and Biharmonic Coordinates. In practice, both perform comparably well on this rope drop as well as other examples.}
  \label{fig:bc-metrics}
\end{figure}

\paragraph{Timings.} Since our only modification in \VPD{}’s per-level solve is the velocity integration step, each level's simulation runs at speeds comparable to direct simulation performed directly at that resolution. As for progressive dynamics for shells~\cite{zhang2024progressive,zhang2025progressive}, we emphasize that the analysis of speed-ups, and thus scalability, for Progressive Dynamics differs from that of standard simulation pipelines. In the context of Progressive Dynamics, scalability is realized through the ability of the framework to accelerate iterative design by allowing fast coarse-level solves. This allows animators to explore variations efficiently at low cost, reserving the expensive fine-level simulation for finalizing a selected design. This is particularly advantageous given that volumetric simulation is generally more computationally expensive than shell simulation because of the additional degrees of freedom in the interior. Across all examples in our paper, we consistently observe one to two orders of magnitude speedup when using our coarse preview-level simulation compared to direct fine-level simulation. In one extreme case—the teaser leaf sheep example—we achieve roughly a 120× speedup by running a single iteration of the coarse preview: the fine-level simulation takes nearly 20 hours, whereas the coarse-level version completes in just 10 minutes.

\subsection{Animation Design}

In addition to standard benchmarks, we also demonstrate an example of animation design tasks applied with \VPD, where fast coarse-level preview animations accelerate the iterative motion design process. We refer to our supplemental video for a detailed presentation of the animations throughout the design processes.

\paragraph{Bunny Noodles} We simulate a bowl of ``bunny noodles''—thin, volumetric deformable slices—dropping onto a plate under gravity, using a three-level hierarchy. As the animation progresses, the strands naturally form into a coherent pile, with fold details and surface regions becoming increasingly resolved at finer levels. The overall bulk motion remains stable and consistent throughout, allowing rapid iteration on coarse-level previews and producing high-quality results at the finest level suitable for final production. See Figure~\ref{fig:bunny-noodles} and our supplemental video for more details.

\begin{figure*}[t!]
  \centering
  \includegraphics[width=\linewidth,keepaspectratio]{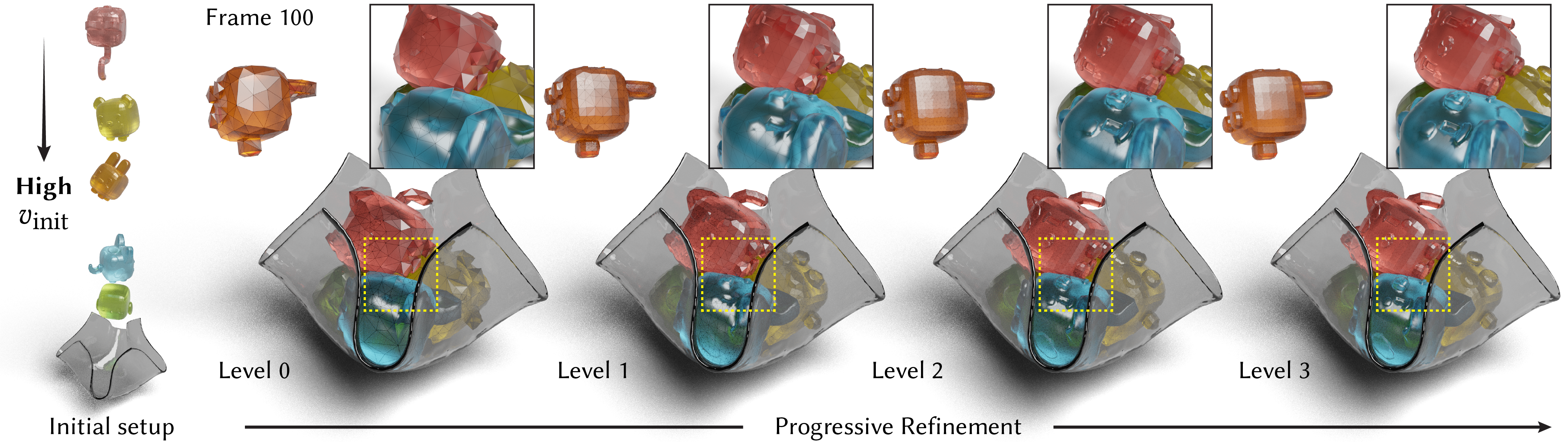} 
  \caption{{\bf Jello in the bowl:} High-speed collisions between Jello characters are consistently resolved across time and resolution. Please see the video for significant deformations. }
  \label{fig:jello}
\end{figure*}

\paragraph{Jello in the Bowl.} We simulate a set of soft, jello-like animal characters being thrown into a glass bowl at very high speed, using a four-level hierarchy. Due to the high initial velocity and low friction, the characters undergo substantial deformations and rapid rotations upon impact. At the coarsest level—100$\times$ lower in resolution than the finest—each character is represented by a simple, blobby approximation with only around 0.5K vertices and 1.5K tetrahedra. As \VPD refines through the hierarchy, both geometric (model surface details) and physical (deformation enrichment) features are progressively recovered, while maintaining consistent motion and structure across all levels. See Figure~\ref{fig:jello} and our supplemental video for more details.

\paragraph{Leaf Sheep.} We animate the story of two leaf sheep colliding with a mushroom-like plant and each other on a seafloor; here a strong underwater current causes them to collide and slide down, eventually settling together on a nearby rock. \VPD successfully refines the geometry and motion of the cerata (hair-like protrusions) with detailed motion, while maintaining closely matched trajectories across all levels of resolution. See Figure~\ref{fig:teaser} and our supplemental video for more details.
\section{Conclusion and Limitations}
\label{sec:conclusion}
We have extended the Progressive Dynamics framework \cite{zhang2024progressive,zhang2025progressive}
to volumetric finite elements, enabling efficient LOD animation design with predictive coarse previews. To support this, we introduce a practical method for multiresolution hierarchy construction and a simple, topology-aware algorithm for prolongation based on boundary binding, allowing several off-the-shelf interpolants to serve as plug-and-play components within our \VPD framework. Extensive stress tests—including high speeds, large deformations, and frictional contact—demonstrate the effectiveness and versatility of our proposed framework.

\paragraph{Limitations and Future Work} Despite these advances, there remain several limitations and opportunities to further improve both the efficiency and quality of progressive volumetric simulation.

\setlength{\columnsep}{0.5em}
\setlength{\intextsep}{0em}
\begin{wrapfigure}{r}{70pt}
   \includegraphics[width=70pt]{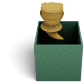}
\end{wrapfigure}
For example, as shown in Figure~\ref{fig:dice-and-wizard}, although \VPD generally handles geometric discrepancies across levels well, we can still construct extreme cases—such as using a hollow cube as the fine mesh—where the resulting fine animation remains consistent with the coarse preview but fails to produce physically plausible behavior, e.g., by falling into the cube (see inset). Meanwhile, penalty weights may still be necessary in extreme cases (see Figures~\ref{fig:increasing-weights} and \ref{fig:consistency-error}). We have been able to use direct energy evaluation instead of subspace proxy energies, but future work should investigate fast schemes for proxy energy evaluation such as adaptive quadrature, as well as homogenization techniques to mitigate artifacts such as shear locking with linear finite elements. Most of our implementation is CPU based, however, most parts of the algorithm would benefit from GPU acceleration. We have simulated volumes, but progressive simulation strategies for combining multiple co-dimensional entities are needed. Finally, interactive coarse-preview performance with consistent fine-scale refinements would be exceptionally useful for animation design and remains an open challenge.

\begin{figure*}[t!]
  \centering
  \includegraphics[width=\linewidth,keepaspectratio]{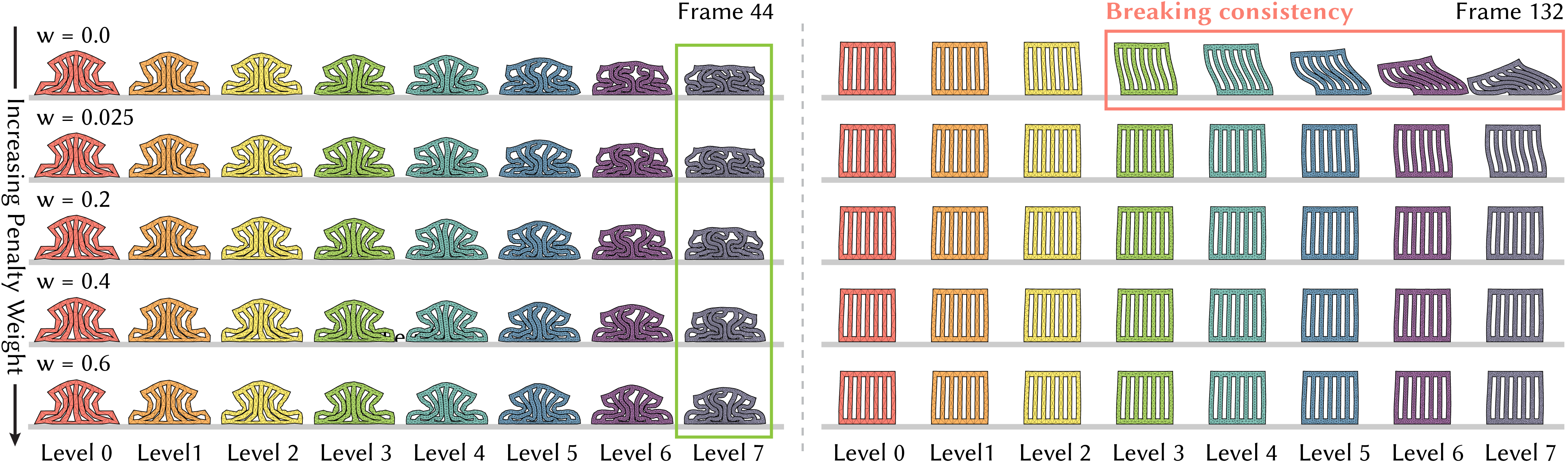} 
  \caption{\textbf{Shapes with Increasing Consistency Weights.} Using the same setup as in Figure~\ref{fig:consistency-error}, we visualize animation results with consistency penalty weights ($w = 0, 0.025, 0.2, 0.4, 0.6$, with $h = 0.04$) on an 8-level hierarchy of a deformable vertical slit-array object dropped onto the ground, shown at frames 44 and 142. As the consistency weight increases, the levelwise solutions become visibly more consistent. In contrast, without a penalty term ($w = 0$), consistency may break down as the animation proceeds. Even a small penalty ($w = 0.025$) significantly improves the preservation of consistency across levels.}
  \label{fig:increasing-weights}
\end{figure*}

\begin{figure}[t!]
  \centering
  \includegraphics[width=\linewidth,keepaspectratio]{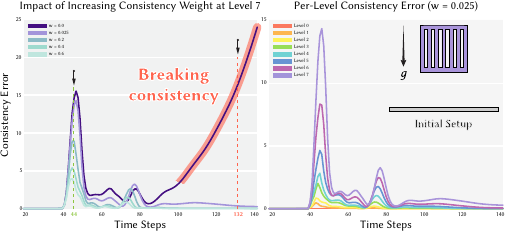} 
  \caption{\textbf{Measuring Consistency with Increasing Consistency Weights.} We evaluate consistency penalty weights ($w = 0, 0.025, 0.2, 0.4, 0.6$ with $h = 0.04$) on an 8-level hierarchy using a deformable vertical slit-array object dropped onto the ground, designed to stress-test per-frame consistency across resolutions. The left plot shows the impact of increasing consistency weight at the finest level ($\ell = 7$), where no penalty ($w = 0$) leads to visible consistency breaking as the animation progresses. Even a small penalty ($w = 0.025$) significantly improves consistency. The right plot reports the per-level consistency error across all levels using $w = 0.025$.}
  \label{fig:consistency-error}
\end{figure}

\appendix
\section{Biharmonic Coordinates with Noncoincident Control Points}
\label{app:biharmonicProof}

We reformulate the computation of biharmonic coordinates with noncoincident control points as follows:
\begin{align}
\min_{W} \ \mathrm{trace}\Big(\frac{1}{2} W^\top A W\Big) \ \mathrm{subject \ to} \ V_f = W V_c \quad \mathrm{and} \quad BV_f = V_c,
\end{align}
where $V_f = W V_c$ means interpolation constraints at selected control vertices (coarse vertices in our case) and $B$ is constructed using barycentric coordinates which potentially contain negative weights for extrapolation. We know that if $A$ has affine functions in its kernel, i.e., if $AV_f = 0$, then the weights $W$ will retain affine precision and we will naturally have $V_f = W V_c.$ Hence, it remains to be shown that the optimization problem above is equivalent to the following standard quadratic programming problem with linear equality constraints:
\begin{align}
\min_W \ \mathrm{trace}\Big(\frac{1}{2} W^\top A W\Big) \quad \mathrm{s.t.} \ BW = I,
\label{eq:SW}
\end{align}
where $I$ is the identity matrix.

Let $B$ be an arbitrary $m\times n$ matrix of rank $m$, and $A$ be a positive definite matrix. Let $I$ denote the $m\times m$ identity matrix.
Our goal is to solve \eqref{eq:SW} in this case.
Let $B^{-1}$ denote the right inverse of $B$. Then $BW=I$ is equivalent to $W=B^{-1}+T$ where $BT=\mathbf{0}$. Let $N$ be a matrix whose columns form an orthonormal basis of $\mathrm{Null}(B)$, i.e., $BN=\mathbf{0}$ and $N^\top N=I$. It follows that $T$ can be parameterized by $T=NY$. The problem \eqref{eq:SW} is then equivalent to 
\begin{align}
    \min_Y \mathrm{trace}\Big(\frac{1}{2}(B^{-1}+NY)^\top A(B^{-1}+NY)\Big).\label{eq:Y}
\end{align}
    Taking derivative of the objective in $Y$ shows that the minimizer to \eqref{eq:Y} is given by 
    $$Y=-(N^\top AN)^{-1}N^\top AB^{-1}.$$
The minimizer to \eqref{eq:SW} is thus given by 
$$W=B^{-1}-N(N^\top AN)^{-1}N^\top AB^{-1}.$$

An analogous argument works for the problem
\begin{align}
    \min_{V_f:BV_f=V_c}\mathrm{trace}\Big(\frac{1}{2}V_f^\top AV_f\Big),\label{eq:Vf}
\end{align}
simply by parametrizing $V_f=B^{-1}V_c+T$. The solution is given by
$$V_f=B^{-1}V_c-N(N^\top AN)^{-1}N^\top AB^{-1}V_c.$$
This leads to the following proposition.
\begin{proposition}\label{prop}
    In the above setting, denote by $V_f$ the solution to \eqref{eq:Vf} and by $W$ the solution to \eqref{eq:SW}. It holds that $V_f=WV_c$.
\end{proposition}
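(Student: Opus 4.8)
The plan is to exploit the two closed-form minimizers already derived above and reduce the claim to a single algebraic comparison, while also giving a structural reason why the identity must hold independently of the explicit formulas.

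First I would observe that the objective in \eqref{eq:SW} decouples over the columns of $W$: writing $W = [w_1, \dots, w_m]$, we have $\mathrm{trace}\big(\tfrac12 W^\top A W\big) = \sum_i \tfrac12 w_i^\top A w_i$, and the matrix constraint $BW = I$ is exactly the collection of vector constraints $B w_i = e_i$, where $e_i$ is the $i$-th standard basis vector. Hence \eqref{eq:SW} is nothing but $m$ independent strictly convex quadratic programs (strict convexity following from $A \succ 0$), one per column, and the $i$-th column $w_i$ is the unique solution of $\min_{v : Bv = e_i} \tfrac12 v^\top A v$. This is precisely problem \eqref{eq:Vf} with right-hand side $V_c = e_i$.

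Next I would appeal to the linearity of the solution map. Because each of these constrained quadratics has a linear KKT system (stationarity together with the linear constraint), its unique minimizer depends linearly on the right-hand side; denoting this linear map by $L$, the minimizer of $\min_{v:Bv = b}\tfrac12 v^\top A v$ is $Lb$. Then $w_i = L e_i$ identifies $W = L$ as a matrix, while the minimizer of \eqref{eq:Vf}, applied column-by-column to the columns of $V_c$, equals $L V_c = W V_c$, which is exactly the claim. Equivalently, and as a concrete verification, I would simply substitute the two formulas already obtained, $W = B^{-1} - N(N^\top A N)^{-1} N^\top A B^{-1}$ and $V_f = B^{-1}V_c - N(N^\top A N)^{-1} N^\top A B^{-1} V_c$, and factor $V_c$ out on the right of the latter to read off $V_f = W V_c$.

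Since the two closed forms are already in hand, there is no genuine analytic obstacle here; the only points requiring care are (i) verifying that $A \succ 0$ — or at least positive definiteness of $N^\top A N$, which is guaranteed once $A$ is positive definite and $N$ has full column rank — so that each subproblem has a unique minimizer and the solution map $L$ is well defined, and (ii) confirming that the full row rank of $B$ makes the constraint $B V_f = V_c$ feasible for every $V_c$, so that the column-wise identification of $W$ with the basis-vector solutions is valid. With these two observations in place the proposition is immediate.
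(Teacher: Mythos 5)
Your proposal is correct, and its primary argument takes a genuinely different route from the paper's. The paper proves the proposition computationally: it parameterizes the feasible sets of \eqref{eq:SW} and \eqref{eq:Vf} by the null space of $B$ (writing $W=B^{-1}+NY$ and $V_f=B^{-1}V_c+T$ with $T=NY$), differentiates in $Y$, and obtains the closed forms $W=B^{-1}-N(N^\top AN)^{-1}N^\top AB^{-1}$ and $V_f=B^{-1}V_c-N(N^\top AN)^{-1}N^\top AB^{-1}V_c$, from which $V_f=WV_c$ is read off by inspection; your ``concrete verification'' (substitute the two formulas and factor $V_c$ out on the right) is exactly this argument. Your main argument, by contrast, never needs the formulas: the trace objective decouples over columns, the constraint $BW=I$ decouples as $Bw_i=e_i$ with $e_i$ the $i$-th standard basis vector, and since the KKT system of $\min_{v:Bv=b}\tfrac{1}{2}v^\top Av$ is linear and nonsingular (using $A\succ 0$ and $\operatorname{rank}(B)=m$), its minimizer is $Lb$ for a fixed matrix $L$; hence $W=[Le_1,\dots,Le_m]=L$ and, applying $L$ to each column of $V_c$, $V_f=LV_c=WV_c$. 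The structural route explains \emph{why} the identity must hold ($W$ is literally the matrix of the solution operator) and generalizes unchanged to any objective that is strictly convex on the feasible set --- e.g.\ $A$ merely positive semi-definite with $N^\top AN\succ 0$, which is closer to the situation in the main text where $A$ is a squared Laplacian. The paper's route buys the explicit expression for $W$, which is what one actually implements when building the prolongation operator. Your two stated points of care (positive definiteness of $N^\top AN$, and feasibility of $BV_f=V_c$ from the full row rank of $B$) are precisely the right hypotheses to check, so there is no gap.
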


\section{Error Analysis of Negative Weights for Extrapolation}
\label{app:errorAnalysisForFunAndProfit}

The goal of this section is to show that the effect of negative weights in constructing the prolongation matrix does not accumulate over time. 
First, recall the construction of the momentum term using the velocity-only prolongation time integration scheme, defined as
\begin{align}\begin{split}
    \hat{x}_{l+1}^{t}
&= x_{l+1}^{t} + P_{l+1}^l v_{l}^{t} h= x_{l+1}^{t} + P_{l+1}^l (x_{l}^{t}  - x_{l}^{t-1}).
\end{split}\label{eq:x}
\end{align}
Meanwhile, considering that an implicit Euler timestep is resolved through minimization,
\begin{align}
x_l^{t+1} = \underset{x}{\operatorname{argmin}} \frac{1}{2 h^2} \|x-\hat{x}_l^t\|^2_{M_l} +  E_l(x),
\end{align}
we can concisely express $x_l^{t+1}$ as $\hat{x}_l^t + \gamma_l^{t+1}$, where $\gamma_l^{t+1}$ captures the nonlinear component of the minimization problem, explicitly,
\begin{align}
    x_l^{t+1} = \hat{x}_l^t + \gamma_l^{t+1}.\label{eq:x2}
\end{align}
Combining \eqref{eq:x} and \eqref{eq:x2} leads to
\begin{align}
    x_{l+1}^{t+1}
&= x_{l+1}^{t} + P_{l+1}^l (x_{l}^{t}  - x_{l}^{t-1}) + \gamma_{l+1}^{t+1}.\label{eq:x3}
\end{align}
Under this time integration scheme, we build the expression of $x_{l+1}^{t+1}$ recursively via the following lemmas. To facilitate our discussion, we assume that $t\geq l$. The case $t<l$ can be derived from a similar analysis.

\begin{lemma}
    It holds for all $t,l\geq 0$ that
    \begin{align}
        x^{t+1}_{l+1}=x_{l+1}^{1} + P_{l+1}^l (x_{l}^{t} - x_{l}^{0}) + {\sum^{t}_{i=1} \gamma_{l+1}^{t+2-i}}.\label{eq:1}
    \end{align}
\end{lemma}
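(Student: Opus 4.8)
The plan is to prove \eqref{eq:1} by induction on $t$ with $l$ held fixed, using the single-step recursion \eqref{eq:x3} as the inductive engine. Since \eqref{eq:x3} already writes $x_{l+1}^{t+1}$ as $x_{l+1}^{t}$ plus an increment, the statement is essentially a telescoping identity; I expect no analytic difficulty, and all the work lives in the bookkeeping of the two accumulating quantities, namely the prolongated position differences and the nonlinear corrections $\gamma_{l+1}^{\cdot}$.

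For the base case $t=0$, I would observe that the right-hand side of \eqref{eq:1} collapses: the difference $x_l^0 - x_l^0$ vanishes and the sum $\sum_{i=1}^{0}\gamma_{l+1}^{\cdot}$ is empty, leaving $x_{l+1}^{1}$, which agrees with the left-hand side $x_{l+1}^{0+1}$ identically.

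For the inductive step, I would assume \eqref{eq:1} with $t$ replaced by $t-1$, that is $x_{l+1}^{t} = x_{l+1}^{1} + P_{l+1}^l\big(x_l^{t-1}-x_l^{0}\big) + \sum_{i=1}^{t-1}\gamma_{l+1}^{t+1-i}$, and substitute it into \eqref{eq:x3}. The two prolongation contributions then merge, by linearity of $P_{l+1}^l$, as $P_{l+1}^l\big[(x_l^{t-1}-x_l^{0})+(x_l^{t}-x_l^{t-1})\big] = P_{l+1}^l\big(x_l^{t}-x_l^{0}\big)$, which is exactly the telescoped form appearing in \eqref{eq:1}.

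The only step requiring care, and the one I would flag as the main (if mild) obstacle, is checking that appending the fresh correction $\gamma_{l+1}^{t+1}$ to the inductive sum reproduces the shifted index set in \eqref{eq:1}. I would settle this with the reindexing $k = t+2-i$, which sends $i\in\{1,\dots,t\}$ to $k\in\{2,\dots,t+1\}$, so that $\sum_{i=1}^{t}\gamma_{l+1}^{t+2-i} = \sum_{k=2}^{t+1}\gamma_{l+1}^{k}$; the inductive hypothesis supplies $\sum_{k=2}^{t}\gamma_{l+1}^{k}$ and the new term contributes $\gamma_{l+1}^{t+1}$, completing the range. This closes the induction for all $t\geq 0$. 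Note that the assumption $t\geq l$ stated before the lemma plays no role in this argument and is only relevant later, when the recursion is additionally unrolled in the level index $l$.
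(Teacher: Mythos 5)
Your proof is correct and takes exactly the approach the paper indicates: fixing $l$ and inducting on $t$ via the one-step recursion \eqref{eq:x3}, with the telescoping of the prolongation terms and the reindexing of the $\gamma$-sum handled properly (the paper itself omits these details). Your closing observation that the hypothesis $t\geq l$ is not needed for this lemma is also accurate; it only matters for Lemma~\ref{lemma}.
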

\begin{proof}
 The result follows from fixing $l$ and inducting on $t$. We omit the details.
\end{proof}

\begin{lemma}\label{lemma}
    Suppose that $t\geq l\geq -1$. It holds 
    \begin{align}
    \begin{split}
        x_{l+1}^{t+1}&= P_{l+1}^{0} x_{0}^{t-l} + \sum^{l+1}_{j=1} P_{l+1}^{l+2-j} x_{l+2-j}^{1} - \sum^{l+1}_{j=1} P_{l+1}^{l+1-j} x_{l+1-j}^{0} \\
&\hspace{1cm} + \sum^{l}_{j=0} \sum^{t-j}_{i=1} P_{l+1}^{l+1-j} \gamma_{l+1-j}^{t+2-i-j}.
    \end{split}
        \label{eq:2}
    \end{align}
\end{lemma}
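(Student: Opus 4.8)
The plan is to prove \eqref{eq:2} by induction on $l$, with Lemma~1 (equation \eqref{eq:1}) serving simultaneously as the base case and as the one-step recursion that drives the induction. I would start the induction at $l=-1$, adopting the natural conventions $P_k^k = I$ (prolongation of a level to itself) and $P_{l+1}^{a} = P_{l+1}^{l}P_{l}^{l-1}\cdots P_{a+1}^{a}$ (composition of single-level prolongations). At $l=-1$ equation \eqref{eq:2} degenerates, since all three $j$-sums are empty, to the tautology $x_0^{t+1}=P_0^0 x_0^{t+1}=x_0^{t+1}$; equivalently one may verify directly that at $l=0$ the formula collapses (with $P_1^1=I$) to exactly \eqref{eq:1}.

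For the inductive step, I would assume \eqref{eq:2} at level $l-1$ for all admissible times and begin from the one-level relation \eqref{eq:1},
\[
x_{l+1}^{t+1} = x_{l+1}^{1} + P_{l+1}^l\bigl(x_{l}^{t} - x_{l}^{0}\bigr) + \sum_{i=1}^{t}\gamma_{l+1}^{t+2-i}.
\]
The crucial move is to expand the single position $x_l^t = x_{(l-1)+1}^{(t-1)+1}$ using the inductive hypothesis; this is legitimate precisely because the standing assumption $t\ge l$ gives $t-1\ge l-1$, matching the lemma's time restriction. I would then apply $P_{l+1}^l$ to the resulting expansion and collapse every composite prolongation via $P_{l+1}^l P_l^{a} = P_{l+1}^{a}$. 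The term $x_l^0$ is deliberately left unexpanded—it is an initial datum, and it already appears in the target as the $j=1$ endpoint $-P_{l+1}^l x_l^0$ of the $x^0$-sum, so no recursion on it is needed (and indeed none is available, since $t=-1<l$ would violate the hypothesis).

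The remaining work, and the step I expect to be the main obstacle, is the purely clerical but error-prone task of reindexing the three families of terms so that the substituted expression matches \eqref{eq:2} verbatim. Each ``identity-level'' contribution coming from \eqref{eq:1}—namely $x_{l+1}^1 = P_{l+1}^{l+1}x_{l+1}^1$, the term $-P_{l+1}^l x_l^0$, and $\sum_i \gamma_{l+1}^{t+2-i} = \sum_i P_{l+1}^{l+1}\gamma_{l+1}^{t+2-i}$—must be absorbed as the boundary term ($j=1$ for the $x^1$- and $x^0$-sums, $j=0$ for the $\gamma$ double sum) of the corresponding sum in \eqref{eq:2}, while the sums inherited from the hypothesis supply the remaining indices after a uniform shift $j\mapsto j+1$. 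I expect the $\gamma$ double sum to be the fussiest: after the hypothesis is applied its inner index runs to $t-1-j$ with superscript $t+1-i-j$, and one must check that the shift $j\mapsto j+1$ together with the absorbed $j=0$ term reproduces exactly the ranges $\sum_{j=0}^{l}\sum_{i=1}^{t-j}$ and the superscript $t+2-i-j$ appearing in \eqref{eq:2}. Once these index alignments are verified term-by-term, the two sides coincide and the induction closes.
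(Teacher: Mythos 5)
Your proposal is correct and follows essentially the same route as the paper, whose proof is only the one-line remark that the result ``follows from fixing $t-l$ and inducting on $l$ using \eqref{eq:1}'': your application of the inductive hypothesis at $(l-1,\,t-1)$ is exactly this fixed-diagonal induction, driven by the one-step relation \eqref{eq:1}. Your base case, the composition rule $P_{l+1}^{l}P_{l}^{a}=P_{l+1}^{a}$, and the reindexing $j\mapsto j+1$ with the absorbed boundary terms all check out, so you have simply supplied the details the paper omits.
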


\begin{proof}
 The result follows from fixing $t-l$ and inducting on $l$ using \eqref{eq:1}. We omit the details.
\end{proof}


\sloppy It is important to note from Lemma \ref{lemma} that $x_{l+1}^{t+1}$ is essentially a weighted linear combination of the boundary states $x^{t-l}_0$, $x^{1}_{l+2-j}$, $x^{0}_{l+1-j}$, along with the nonlinear contributions introduced through optimization solves. By \eqref{eq:2}, the triangle inequality, and the sub-multiplicativity of the matrix 2-norm, we have
\begin{equation}
\begin{split}
\|x_{l+1}^{t+1}\|_2
&\le \|P_{l+1}^{0}\|_2 \ \|x_{0}^{t-l}\|_2 + \sum^{l+1}_{j=1} \|P_{l+1}^{l+2-j}\|_2 \ \|x_{l+2-j}^{1}\|_2 \\ 
&\hspace{1cm}+ \sum^{l+1}_{j=1} \|P_{l+1}^{l+1-j}\|_2 \ \|x_{l+1-j}^{0}\|_2  \\ 
&\hspace{1cm}+ \sum^{l}_{j=0} \sum^{t-j}_{i=1} \| P_{l+1}^{l+1-j}\|_2 \|\gamma_{l+1-j}^{t+2-i-j}\|_2.
\end{split}
\end{equation}
By far, this clearly indicates that the influence of the prolongation matrices on $x_{l+1}^{t+1}$ is inherently bounded by their norms. Taking one step further, note that $\|A\|_2 \leq \|A\|_F$ always holds, where the Frobenius norm is defined as $\|A\|_F = \sqrt{\sum_{i=1}^m \sum_{j=1}^n |a_{ij}|^2}$.

Based on this, we could further characterize the Frobenius norm of a prolongation matrix $P$ through its construction:

\paragraph{Case 1.} $P$ is constructed via barycentric coordinates where $\sum_j \lambda_{ij} = 1$ for all $i$ and $0 \leq |\lambda_{ij}| \leq 1$. Based on this, we know that $\|P\|_F = \sqrt{\sum_{i=1}^m \sum_{j=1}^n |\lambda_{ij}|^2} \leq \sqrt{m}.$

\paragraph{Case 2.} $P$ is constructed via barycentric coordinates where $\sum_j \lambda_{ij} = 1$ for all $i$, but $\lambda_{ij}$ can be negative. In this case, we know that some entries, $\lambda$, could be greater than 1, and thus their values become even larger when squared. 
\textit{However, since the decimation algorithm used to construct the mesh hierarchy includes a controllable parameter $\epsilon$, which defines the maximum allowable distance between a coarse vertex and the fine surface, we can ensure that these entries remain bounded.}

\paragraph{Case 3.} $P$ is constructed using other general methods, such as biharmonic coordinates, where the entries are typically allowed to be negative.

\section{Metrics}
\label{sec:metrics}

\subsection{Temporal Continuity Metric}
\label{sec:continuity-metric}

To evaluate temporal continuity of a proposed position $y_l^t$ in Progressive Dynamics, Zhang et al.~\shortcite{zhang2025progressive} define a continuity error relative to its time-stencil neighbors $y_l^{t-1}$ and $y_l^{t+1}$ using a midpoint state estimator derived from reformulating implicit Euler as a discrete boundary value problem. They then construct continuity error measures for each ``proposed'' position $y_l^t$ in the resolution-time grid relative to its horizontal (time) neighbors, $y_l^{t-1}$ and $y_l^{t+1}$, as
\begin{align}\label{eq:continuity-error}
\begin{split}
    e^t_l &= \|y^t_l - \phi^t_l(y^{t+1}_l,y^{t-1}_l)\|^2_{M_l}\\
&= \|\tfrac{1}{2} (y_l^{t+1} - 2 y_l^{t} + y_l^{t-1}) + \tfrac{h^2}{2} M_l^{-1}\nabla F_l(y_l^{t+1})\|_{M_l}^2,
\end{split}
\end{align}
which yields a per-timestep error in meters, integrated over the surface using the mass matrix $M_l$ to ensure resolution-aware scaling.

Although $e_l^t$ is well defined, in practice each timestep in Progressive Dynamics is solved to a fixed tolerance $\epsilon$ on the Newton decrement \cite{Li2021CIPC}, leading to varying residuals across steps. As a result, even direct single-level timestepping ends up with nonzero $e_l^t$. To account for this, they further normalize $e_l^t$ by the residual of the original progressive solve:
\begin{align*}
\hat{e}_l^t &= \|(x_l^{t+1} - \hat{x}_l^t) + h^2 M_l^{-1}\nabla F_l(x_l^{t+1})\|_{M_l}^2,
\end{align*}
where $\hat{x}_l^t$ corresponds to the update rule of the specific Progressive Dynamics integrator. Thus, they define the final continuity metric as:
\begin{align}
\label{eq:continuity-metric}
n_l^t = \frac{e_l^t}{\hat{e}_l^t},
\end{align}
a dimensionless measure for consistent comparison across timesteps and methods.

\subsection{Geometric Consistency Metric}
\label{sec:consistency-metric}

To quantify geometric consistency between multilevel solutions at the same timestep $t$, Zhang et al.~\shortcite{zhang2025progressive} propose to use the following metric:
\begin{align}
\label{eq:projection-op}
d_{l-1}^t = \|\Pi_{l-1}^{l}(x_l^t) - x_{l-1}^t\|_{M_{l-1}}^2,
\end{align}
where \(\Pi_{l-1}^{l}(\cdot)\) is the projection operator mapping any intermediate level ($l > 0$) geometry \(x_l^t\) to the next coarser level $l-1$. In the shell setting, this is defined as $\Pi_{l-1}^{l}(\cdot) = \big( (U_{l}^{l-1})^\top (U_{l}^{l-1}) \big)^{-1} (U_{l}^{l-1})^\top$, where $U_{l}^{l-1}$ is the linear intrinsic part of the prolongation operator $P_{l}^{l-1}(\cdot)$. In our volumetric case, $U_{l}^{l-1} = P_{l}^{l-1}$, since the prolongation $P_{l}^{l-1}$ is already linear.






    
    


\bibliographystyle{ACM-Reference-Format}
\bibliography{references}

\end{document}